\journalname{International Journal of Parallel Programming}
\newcommand{\pocl}[0]{\textit{pocl} }
\newcommand{\lb}{[}
\begin{document}

\markboth{P. J\"a\"askel\"ainen et al.}{pocl: A Performance-Portable OpenCL Implementation}
\title{pocl: A Performance-Portable OpenCL Implementation}

\author{Pekka~J\"a\"askel\"ainen \and
Carlos~S\'anchez~de~La~Lama \and
Erik~Schnetter \and
Kalle~Raiskila \and
Jarmo~Takala \and
Heikki~Berg
}




\institute {P. J\"a\"askel\"ainen and J. Takala \at
Tampere University of Technology, Finland \newline
\{pekka.jaaskelainen,jarmo.takala\}@tut.fi
\and
C.S. de La Lama \at
Knowledge Development for POF, Madrid, Spain \newline
carlos.sanchez@kdpof.com
\and
E. Schnetter \at
Perimeter Institute for Theoretical Physics, Canada; \newline
Department of Physics, University of Guelph, Guelph, Canada; \newline
Center for Computation \& Technology, Louisiana State University, USA \newline
eschnetter@perimeterinstitute.ca
\and
K. Raiskila and H. Berg \at
Nokia Research Center, Finland \newline
kalle.raiskila@nokia.com
}

\date{Received: date / Accepted: date}
\maketitle

\begin{abstract}
 
OpenCL is a standard for parallel programming of heterogeneous systems. The benefits 
of a common programming standard are clear; multiple vendors can provide support for 
application descriptions written according to the standard, thus reducing the 
program porting effort. While the standard brings the obvious benefits of platform portability,
the performance portability aspects are largely left to the programmer.
The situation is made worse due to multiple proprietary vendor implementations with
different characteristics, and, thus, required optimization strategies.

In this paper, we propose an OpenCL implementation that is both 
portable and performance portable. At its core is a kernel 
compiler that can be used to exploit the data parallelism of OpenCL 
programs on multiple platforms with different parallel hardware styles. 
The kernel compiler is modularized to perform target-independent 
parallel region formation separately from the target-specific parallel mapping of 
the regions to enable support for various styles of fine-grained parallel 
resources such as subword SIMD extensions, SIMD datapaths and static multi-issue. 
Unlike previous similar techniques that work on the source level,
the parallel region formation retains the information of the data parallelism 
using the LLVM IR and its metadata infrastructure. This data can be exploited 
by the later generic compiler passes for efficient parallelization.

The proposed open source implementation of OpenCL is also platform 
portable, enabling OpenCL on a wide range of architectures, both already commercialized and 
on those that are still under research. The paper describes how the portability of 
the implementation is achieved. We test the two aspects to portability by utilizing 
the kernel compiler and the OpenCL implementation to run OpenCL applications 
in various platforms with different style of parallel resources.
The results show that most of the benchmarked applications when compiled using 
pocl were faster or close to as fast as the best proprietary OpenCL implementation 
for the platform at hand.



\end{abstract}

\keywords{OpenCL, LLVM, GPGPU, VLIW, SIMD, Parallel Programming, Heterogeneous Platforms, Performance Portability}


\section{Introduction}

Widely adopted programming standards help the programmer by reducing the 
porting effort when moving the software to a new platform. 
\textit{Open Computing Language (OpenCL)}~\cite{OpenCL12}  is a 
relatively new standard for parallel programming of heterogeneous 
systems. 
OpenCL allows the programmer to describe the program parallelism by
expressing the computation in the \textit{Single Program Multiple Data (SPMD)}
style. In this style, multiple parallel \textit{work-items} execute the
same kernel function in parallel with synchronization expressed explicitly
by the programmer. Another key concept in OpenCL is the \textit{work-group}
which collects a set of coupled \textit{work-items} possibly synchronizing
with each other. However, across multiple \textit{work-groups} executing the
same kernel there cannot be data dependencies. These concepts allow exploiting
parallelism in multiple levels for a single kernel description; 
inside a work-item, across work-items in a single work-group and across all
the work-groups in the \textit{work-space}.

While the OpenCL standard provides an extensive programming platform for portable 
heterogeneous parallel programming, the version 1.2 of the standard is quite
low-level, exposing a plenty of details of the platform to the programmer. Thus, 
using these platform queries, it is possible to adapt the program to each of
the platforms. However, this means that to achieve performance portability,
the programmer has to explicitly do the adaptation for each program separately. 
In addition, implementations of the OpenCL standard are vendor and
platform specific, thus acquiring the full performance of an OpenCL application
requires the programmer to become familiar with the special characteristics of
the implementation at hand and tune the program accordingly. This is a serious drawback for performance portability 
as manual optimizations are needed to port the same code to another platform.


In our earlier work~\cite{OpenCLASIP},  we used
kernel serialization for extracting instruction-level 
parallelism for a statically scheduled processor template with
the OpenCL vendor extension interface used for providing seamless
access to special function units. This initial approach proved to be
a good basis for a kernel compiler with improved support for performance 
portability.
In this article, we propose kernel compilation techniques that expose the implicit 
parallelism of OpenCL multiple work-item (WI) work-groups in a form that can 
be exploited in different types of parallel processing hardware.
We propose a set of compiler transformations, which together produce multi-WI work-group functions
that can be parallelized in multiple ways and using different granularities of parallel
resources, depending on the target. 
By separating the \textit{parallel region} extraction from the
actual parallel mapping of the multi-WI work-groups, we
obtain a basis for improving the performance portability of OpenCL kernels.
We have realized the proposed transformations as a modularized set of passes using
the LLVM compiler infrastructure~\cite{LLVM} and integrated them in an OpenCL implementation 
called \textit{Portable Computing Language (pocl)}. We have used \pocl to run applications 
on different processor architectures with different parallelism capabilities to test
the applicability of the proposed approach.

The main contributions of this article are as follows:
\begin{itemize}
 \item OpenCL kernel compilation techniques that separate the parallel region formation
from multiple WI work-group functions from the actual platform specific
parallelization methods;
 \item a kernel compiler that works on the LLVM IR, thus can support more 
kernel languages than OpenCL C via the {Standard Portable Intermediate Representation (SPIR)} 
standard~\cite{SPIR12}; and
 \item a complete OpenCL implementation, which allows performance portability over 
wide range of computing architectures with different styles and degrees of 
parallel hardware.
\end{itemize}

The remainder of the article is organized as follows.  The first section is
an overview to the OpenCL standard. It briefly describes its concepts that
are relevant to understanding the rest of the article. The higher level 
software architecture of our OpenCL implementation, \textit{pocl}, containing the 
proposed transformation passes is given in Section~\ref{sec:SoftwareArchitecture}. 
The actual kernel compiler techniques are described in
Sections~\ref{sec:KernelCompiler}.
Section~\ref{sec:MathFunctions} describes the vectorized mathematical functions used 
in \textit{pocl}.
Section~\ref{sec:Evaluation} evaluates the applicability of the proposed approach on several platforms and
compares the performance to the proprietary OpenCL implementations. 
Section~\ref{sec:RelatedWork} compares the proposed techniques 
to the related work. Finally, conclusions and the planned future work are 
presented in Section~\ref{sec:Conclusions}.

\section{Open Computing Language}


The \textit{OpenCL 1.2 framework} specifies three main parts: The \textit{OpenCL
Platform Layer} for querying information of the platform and the
supported devices, the \textit{OpenCL Runtime}  providing
programming interfaces for controlling the devices by queuing them
kernel execution and memory transfer commands, and the \textit{OpenCL
Compiler} for compiling the OpenCL C kernels for each targeted device.
OpenCL programs structure the computational parts of the application 
into \textit{kernels} defined in the OpenCL C kernel language, 
and specify that there shall be no data dependencies between the ``kernel 
instances'' (\textit{work-items}, analogous to loop iterations) by 
default. 

The example OpenCL C
kernel in Fig.~\ref{fig:opencl-c-vector-dot-product} 
can be executed on different width \textit{Single Instruction Multiple Data (SIMD)} 
hardware in parallel, parallelizing as many \textit{work-items} as there are parallel
processing elements in the device.
The call to \textit{get\_global\_id(0)} returns the index of the work-item in the
\textit{global index space} which in this case maps directly to the index in
the buffers.
A difference to standard C notation in this example
is the use of the \textit{global} qualifier in the kernel
arguments. This
is used to mark the pointers to point to buffers in \textit{global memory}.
Other disjoint explicitly addressed memory spaces in OpenCL C include 
the \textit{local memory} visible to single \textit{work-groups} (groups
of work-items within the global index space that can synchronize with
each other) at a time, the \textit{private memory} visible only to 
single \textit{work-items}, and the \textit{constant memory} for storing 
read-only data. The kernel can use vector datatypes, e.g. float4 is a four
element floating point vector.

\begin{figure}
\begin{minipage}[!t]{.3\linewidth}

\begin{lstlisting}[language=C,basicstyle=\footnotesize]
kernel void 
dot_product (global const float4 *a,  
             global const float4 *b, 
             global float *c) 
{ 
  int gid = get_global_id(0); 

  c[gid] = dot(a[gid], b[gid]); 
}
\end{lstlisting}

\end{minipage}

\caption{Vector dot product in OpenCL C.}

\label{fig:opencl-c-vector-dot-product}
\end{figure}

The OpenCL runtime API (a C language API) is used to launch kernels and 
data transfer commands in one or more \textit{compute devices} with \textit{event} 
synchronization. Thus, the targeted OpenCL platform consists of a host
device that executes the top level program and one or more devices that
perform the computation. Portability of OpenCL programs across a wide range of different 
heterogeneous platforms is achieved by describing the kernels as source code strings 
which are then explicitly compiled using the runtime API to the targeted devices. 
Thus, even if the runtime part of the application was distributed as a platform 
specific binary, the device programs can be recompiled to each device using 
the OpenCL C compiler of the platform at hand. 

%
%
In the kernel execution, the OpenCL programmer can describe 
two forms of parallelism:
work-item parallelism and work-group level parallelism. Parallelism within a
single work-item can be explicitly expressed using vector computations. In addition, the
implicit instruction level parallelism that can be described in traditional C functions
is also available: the programmer can define, e.g., for-loops inside work-items 
that can be parallelized by the compiler or the hardware, if there are no dependencies 
restricting the parallelization.

The important additional source of parallelism is the data level parallelism described by
multi-WI work-groups.
In OpenCL 1.2, multiple work-items in a work-group that execute an instance of
the same kernel described in the OpenCL C programming language can be assumed 
to be independent by default with only explicit synchronization constructs limiting
the parallelism~\cite{OpenCL12}. Thus, the device is free to execute 
the work-items in parallel or in serial manner. 
%
%
%
Task and thread level parallelism can be also exploited in OpenCL applications.
Multiple work-groups are assumed to be independent of each other, thus can be 
executed in parallel to exploit multiple hardware threads or multiple completely
independent cores. In addition, at the higher level of OpenCL applications, separate 
commands in an out-of-order command queue, and commands in different command queues
can be assumed to be independent of each other unless explicitly 
synchronized using events or command-queue barriers. 

\section{Portable OpenCL Implementation}
\label{sec:SoftwareArchitecture}

The proposed kernel compilation techniques are included in our OpenCL implementation, \textit{pocl}. 
We give an overview of its software architecture before going to the details of the transformations. 
The software architecture of \pocl is modularized to encourage code
reuse and to isolate the device specific aspects of OpenCL\@ to provide a
platform portable implementation.
The higher-level components of \pocl  are illustrated in
Fig.~\ref{fig:SoftwareComponents}. The implementation is divided
to parts that are executed in the host and to those that implement
device-specific behavior. The \textit{host layer} implementation is 
portable to targets with operating system C compiler support. 
The \textit{device layer} encapsulates
the operating system and \textit{instruction-set architecture (ISA)} 
specific parts such as code generation for the target device,
and orchestration of the execution of the kernels in the device.

\begin{figure}[t]
\begin{center}
\includegraphics[width=0.7\textwidth]{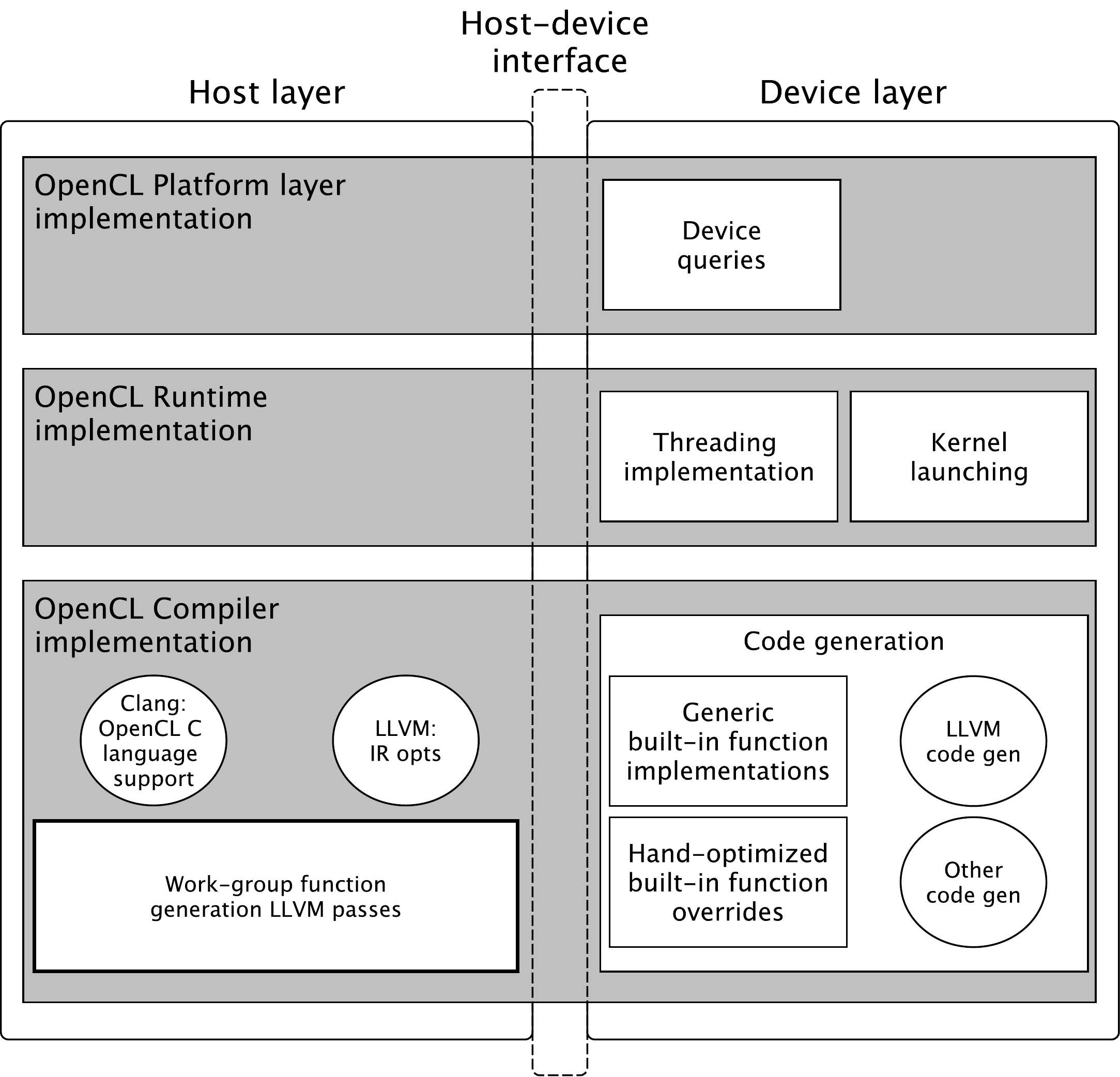}
\caption{The subcomponents in the OpenCL implementation. The host layer
includes parts that are executed in the OpenCL host. The device layer
is used as an hardware abstraction layer to encapsulate the device-specific
parts.}
\label{fig:SoftwareComponents}
\end{center}
\end{figure}

Most of the API implementations of the OpenCL framework in \pocl
are generic implementations written in C which call the \textit{device
layer} through a generic \textit{host-device interface} for device-specific 
parts. For example, when the OpenCL program queries for the number of
devices, \pocl returns a list of supported devices without
needing to do anything device-specific yet. However, when the application
asks for the size of the global memory in a device, the query is delegated
down to the device layer implementation of the device at hand.

The device layer consists of target-specific implementations for functionality such
as target-specific parts of the kernel compilation process, the final execution of 
the command queue including uploading the kernel to the device and launching it, 
querying device characteristics, etc.
%
The responsibilities between the device-specific and generic
parts in the currently supported device interfaces are as follows:  

\begin{description}
 \item[\texttt{basic}] A minimal example CPU device implementation.
The execution of kernels happens one work-group at a time without
multithreading. This driver can be used for implementing a device on
a POSIX-compliant operating system for the case where the host and the
device are the same. 

 \item[\texttt{pthread}] Similar to 'basic' except that it uses the POSIX threads~\cite{pthreads} 
library to execute multiple work-groups in parallel. This is an example
of a device layer implementation that is capable of exploiting the thread
level parallelism in multi-work-group execution.

 \item[\texttt{ttasim}] A proof-of-concept implementation of a simulated heterogeneous accelerator
setup. The driver simulates customizable \textit{Transport-Triggered Architecture (TTA)}~\cite{Corporaal:97:TTA_BOOK}
based accelerators executing the kernels. The processors are simulated by calling
the instruction set simulator of the \textit{TTA-based Co-design Environment (TCE)}~\cite{TCEFPGA-ALT}.
The driver performs the memory management of the device memories at the host side, 
and controls the kernel execution at the device.

\item[\texttt{cellspu}] Another (experimental) heterogeneous accelerator device. This controls a 
single \textit{Synergistic Processing Elements (SPE)} in the heterogeneous Cell~\cite{SPUALT} 
architecture running a Linux-based operating system. It uses the libspe for interfacing with 
the SPE.
 \end{description}

It should be noted that each of the previous device layers provide 
varying levels of portability themselves. For example, the \textit{pthread} device layer
implementation can be used with \textit{Symmetric Multi-Processing (SMP)} systems that run 
an operating system which supports the pthreads API, regardless of the underlying 
CPU architecture. The \textit{ttasim} driver, on the other hand, assumes a specific 
communication mechanism through explicit messages and buffer transfers using DMA commands.


One important responsibility of a device layer implementation is
resource management, that is, ensuring the resources of the device needed for
kernel execution resources are properly shared and synchronized between multiple 
kernel executions. The allocation of the OpenCL buffers from the device memory 
requested via the \textit{clCreateBuffer} and similar APIs is also part of the 
resource management responsibility of the device layer. 


For assisting in memory management, \pocl provides a 
memory allocator implementation called \textit{Bufalloc}
which aims to optimize the allocation of large 
continuous buffers typical in OpenCL applications. 
There are two main motivations for the
customized kernel buffer allocator: 1) exploit the knowledge of the 
``throughput computing'' workloads of OpenCL where the buffers are usually
relatively big to reduce fragmentation, and 2) offer a generic memory allocator for
devices without such support on device.

The working principle of the allocator is similar to \textit{memory pools} in that 
a larger region of memory can be allocated at once with a single \textit{malloc} 
call (or at compile time by allocating a static array). 
Chunks of this region are then returned to the application using a fast allocation 
strategy tailored for the OpenCL buffer allocation requests. 
As the allocation of the initial region can be done in multiple ways, the same 
memory allocator can be also used to manage memory for devices without operating 
systems. In that case, the host only keeps book of all the buffer allocations using 
Bufalloc for a known available region in the device memory and the device assumes 
all the kernel buffer pointers are initialized by the host to valid memory locations.
The memory allocation strategy is designed according to the assumption that 
the buffers are long lived (often for the whole lifetime of the OpenCL application) 
and are allocated and deallocated in groups (space for all the kernel buffer 
arguments reserved and freed with successive calls to the allocator). These
assumptions imply that memory fragmentation can be reduced by 
allocating neighboring areas of the memory for the successive allocation requests.
A simple first fit algorithm is used in finding free space for the buffer allocation
requests. 

The internal book keeping structure of Bufalloc is split to \textit{chunks} with
a free/allocated flag and a size. The chunks are ordered by their starting
address in a linked list. The last chunk in the list is a sentinel that holds
all the unallocated memory. When a buffer allocation request is received,
the linked list is traversed from the beginning to the end until an unallocated
chunk with enough space is found. This chunk is then split to two chunks;
one having the exact size of the buffer request that is returned to the
caller, and another carrying the rest of the unallocated space in the original
chunk. The allocation strategy has a customizable \textit{greedy} mode which 
always serves new requests from the last chunk (end of the region) if possible. 
This mode results more often in the successive kernel buffer allocation calls being 
allocated from continuous memory space given the original allocated region
is large enough. 


\section{Performance Portable Kernel Compiler}
\label{sec:KernelCompiler}

The performance portability in our approach is obtained with an OpenCL kernel compiler 
which exposes the parallelism in the kernels in such
a way that it can be mapped to the diverse parallel resources available in
the different types of computing devices. In this section, we discuss 
the kernel compiler and provide details of the work-group function generation.

\subsection{Compilation Chain Overview}

The pocl kernel compiler is based on unmodified Clang~\cite{clanghome} and LLVM~\cite{llvmhome} 
tools. Clang
parses the OpenCL C kernels and produces an LLVM \textit{Intermediate Representation (IR)}
for the \pocl kernel compiler passes.
The generated LLVM IR 
contains the representation of the kernel code for a single work-item,
matching the original OpenCL C kernel description as an LLVM IR function.

The kernel description can be thought of as a description of a thread which
executes independently by default, and which is synchronized explicitly across
other work-items in the same work-group by the programmer-defined barriers. 
The thread is then spawned as many times as there are work-items in the work-group,
in the \textit{Single Program Multiple Data (SPMD)} parallel program style.

Whether this single work-item program description can be executed directly on the device
depends on the execution model of the target. If the target device is tailored
to the SPMD style of parallelism it might
be able to input a single kernel description and apply the same instructions over multiple data
automatically.
This is the case with many of the GPUs which implement an execution model called 
\textit{Single Instruction Multiple Threads (SIMT\@)}. SIMT devices make it 
the responsibility of the hardware to spread the execution of the kernel description
to multiple work-items that consist the work-group. Each SIMT core contains an independent 
program counter, but share the same instruction feed, so that the same kernel instruction is broadcast
to all the cores with the same program counter value. Thus, the cores wait for their 
own separate part of the kernel in case of diverging execution, and continue with parallel 
execution whenever the work-items converge~\cite{CUDA}. 

For \textit{Multiple Instructions Multiple Data (MIMD)} or architectures with 
\textit{Single Instruction Multiple Data (SIMD)} instructions, on the other hand, the semantics of a multi-WI 
work-group execution must be 
created by the compiler or the threading runtime. A straightforward implementation of
OpenCL kernel execution on a MIMD device would simply spawn as many threads as there
are work-items for the kernel function, and implement the work-group barriers using
barrier synchronization primitives. However, as OpenCL is optimized for high throughput massively 
parallel computation, this type of thread level parallelism is usually too heavy for
work-item execution. Creating, executing, barrier-synchronizing and context switching hundreds or 
thousands of threads for each kernel invocation would incur so large overheads that the performance 
benefits of parallel work-item execution are easily ruined for at least the smaller kernel functions.

Moreover, in order to improve the performance portability of the OpenCL programs, it is desirable
to map the work-items in a work-group over all the parallel resources
available on the device at hand.
For example, if the target supports SIMD instructions as instruction set extensions, the compiler should
attempt to pack multiple work-items in the work-group to the same vector instructions, one work-item
per vector lane. In case of in-order superscalar or \textit{Very Long Instruction Word (VLIW)} style 
\textit{Instruction-Level Parallel (ILP)} architectures it might be beneficial to ``unroll'' the 
parallel regions in the kernel code in 
such a way that the operations of several independent work-items can be statically scheduled 
to the multiple function units of the target device. On the other hand, if vectorization
across the work-group is not feasible, for example, due to excessive diverging control flow in the
kernel, the most efficient way to produce the work-group execution 
might be to execute all the work-items serially using simple loops and rely on 
the work-item vector datatypes for vector hardware utilization. This alternative minimizes the
instruction cache footprint and might still be able to exploit instruction parallel execution of
multiple work-items in case of out-of-order hardware.

An overview of the kernel compilation process of pocl is depicted in Fig.~\ref{fig:kernelCompilationMethods}.
First, the OpenCL kernel (if given in source form) is fed to the Clang OpenCL C frontend 
which produces an LLVM IR of the kernel function for a single work-item. SPIR is an alternative 
input format which allows to skip the Clang phase. 

The LLVM IR function that describes the behavior of a single work-item in the work-group
is then processed by the pocl's kernel compiler, which links the IR against
an LLVM IR library of device-specific OpenCL built-in function implementations at
the bitcode level. The function is converted to a \textit{work-group function} 
(in case of a non-SPMD execution model target) that generates a
version of the function that statically executes all the work-items of the 
work-group. This is done using the work-group function generation passes of pocl.

When compiling to SPMD-optimized hardware such as SIMT GPUs\footnote{
At the time of this writing, pocl does not yet support popular commercial GPU targets.
However, the SPMD/GPU path of the kernel compiler has been tested by using
research targets to ensure GPU-like devices can be supported using pocl.}, the generation 
of the work-group function is not necessary as the hardware produces the 
multiple parallel work-item execution. However, it is sometimes still 
beneficial to merge multiple work-items to expose instruction-level 
parallelism in case the cores contain multiple function units. 

The work-group function is a version of the original kernel with data parallel 
regions across the independent work-items exposed to the later phases of the compiler. 
It consists of parallel ``work-item loops'' that execute so called ``parallel regions'' 
for all work-items. The parallel regions are formed according to the barriers in the 
kernel.

Currently the work-group function generation is performed at \textit{kernel enqueue}
time, when the local size is known. The known local size makes it possible to set 
constant trip counts to the work-item loops, leading to easier static parallelization
later on in the compilation chain. For example, a vectorizer can then easily see whether 
the trip counts can be covered evenly with the maximum size vector instructions in 
the machine. Otherwise, it would always need to create a copy of the loop that iterates 
the ``overhead iterations'' that could not be covered evenly with the vector instructions. 
The drawback of this approach is that one work-group function needs to be generated 
for each local size. If the same kernel is executed with a lot of different local sizes,
it leads to compilation time increase. We have not seen this is as a problem yet in 
our test cases. However, it would be trivial to add a version of the work-group function with 
variable trip counts in the work-item loops to produce a work-group function that can 
be used with all local sizes, but might not be so efficiently parallelizable.

The produced work-group function is in a format that can be launched for
different parts of the work-space in parallel. This can be seen as an additional 
struct function argument added to the work-group function that contains the work-space
coordinates among other information. In addition, the automatic local array visible
in the original kernel source is converted to a function argument to unify the handling of 
local buffers which can be allocated both by the host and by the kernel.

Finally, the work-group function is passed to the code generator and assembler which
generate the executable kernel binary for the target device. The work-group 
function is potentially accompanied with a launcher function in case of a heterogeneous 
device. In that case the device contains its own main function which executes the
work-group function on-demand.

\begin{figure}[tp]
\begin{center}
\includegraphics[width=0.87\textwidth]{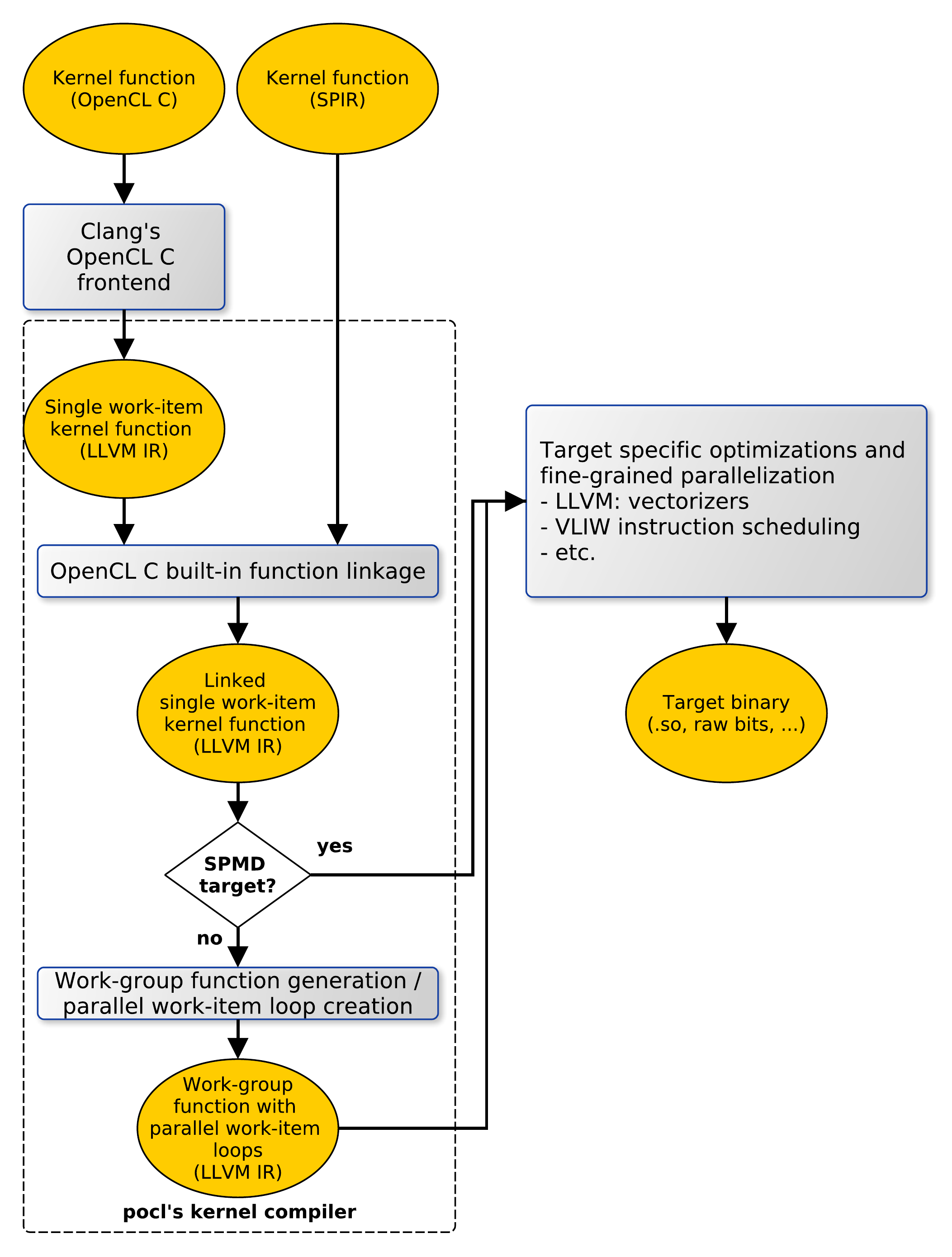}
\caption{A high level illustration of pocl's kernel compilation chain. The source
code of the kernel is read by Clang which produces an LLVM IR for the single work-item kernel 
description. Alternatively, a pre-built SPIR bitcode binary can be used as an input. The
OpenCL C built-in functions are linked at the LLVM IR level to the kernel after 
which the optional work-group function generation is done. In case the target can 
execute the SPMD single work-item kernel description directly for all work-items in 
the work-group (as is the case with most GPUs), or
the local size is one, this step is skipped. The work-group function generation 
is the last responsibility of the pocl's kernel compiler; it helps the later 
target-specific passes (such as vectorization) by creating parallel work-item 
loops which are annotated using LLVM metadata.
}
\label{fig:kernelCompilationMethods}
\end{center}
\end{figure}

\subsection{Generation of Parallel Work-Group Functions}

The main responsibility of the kernel compiler of pocl is generating a new
work-group function out of the single work-item kernel 
function that is produced by the Clang frontend. The work-group function
executes the kernel code for all the work-items in a work-group of 
a given size and exposes the parallel parts between work-items in a way
that can be potentially exploited by a target specific vectorization pass or 
an instruction scheduler/bundler. In practice, the parallel loops are annotated 
with LLVM metadata that retains the information of the parallel iterations for 
later phases such as the loop vectorizer, which then does not
have to prove the independence of the loop iterations to perform vectorization. 

Producing multi-WI work-group functions is not trivial due to the need
to respect the synchronization semantics of the work-group barriers inside 
the kernel code. That is, the multiple work-item execution cannot be 
implemented by simply adding a loop around the kernel code that executes
the function for all the work-items, but the regions between the
barriers must be parallelized separately.
Statically parallelizing kernels with barriers inside conditional regions 
such as for-loops or if-else-structures adds further complexity. In such cases 
the regions between barriers are harder to parallelize due to the varying paths the
execution can take to the barrier call. 
The pocl kernel compiler is modularized to parts that are reusable across 
several parallelization methods.  For example, 
all of the methods for implementing static parallel computation on the device 
need to identify the \textit{parallel regions} in the kernels (regions 
between barriers) which can be then mapped to the parallel resources in 
multiple ways. 

Throughout the following algorithm descriptions the kernels are 
represented as \textit{Single Static Assignment (SSA)}~\cite{computingSSA} 
\textit{Control Flow Graphs (CFG)}~\cite{CFG} of the LLVM IR\@. 
The relevant characteristics of the internal representation are
as follows:

\begin{itemize}
\item  Variable assignments and operations are abstracted 
  as \emph{instructions}. Instructions
  have at most one result, and referring to an
  instruction means referring also to its output value if
  it exists. 

\item A node in a CFG is a \emph{Basic Block (BB\@)}.
  A BB is a branchless sequence of \emph{instructions} 
  which is always executed as an entity, from the
  first instruction to the last.
  
\item An edge in a CFG represents a branch in the control
  flow. These edges are defined by the jump instructions in
  the \emph{source} BB\@. This implies that creating a copy $B'$ of a
  basic block $B$ which has an edge to basic block $C$
  results in $B'$ also having an edge to $C$.
  
\item Both the source and the destination BBs of
  any CFG edge belong to the CFG\@. This is important characteristics
  in order to differentiate between a CFG and a sub-CFGs, 
  defined below.
  
\item Multiple \textit{exit BBs} are allowed. Typically the
 exit BBs are blocks that return from the function at hand.

\end{itemize}

We also define the term \emph{sub-CFG}, to refer to a
CFG which is a subgraph of another CFG\@. A
sub-CFG always has an associated CFG, and has essentially
the same properties as CFGs, save that it might have edges leading
to blocks that do not belong to the sub-CFG but to the
parent CFG\@.

There are two helper functions which are 
used in the algorithm descriptions later in this section.
Function \textit{CreateSubgraph} finds all the nodes
which can potentially be visited when traversing from node $A$
to node $B$. This function can be used to construct a single-entry
single-exit subgraph between two given nodes. It can be implemented 
with a depth-first 
search starting from the desired subgraph entry $A$ and keeping record 
of all the nodes visited when traversing all the possible paths to
the subgraph exit node $B$ and by ignoring edges back to an
already visited node to avoid infinite loops.

%

Function \textit{ReplicateCFG} takes a CFG or a sub-CFG and
replicates the whole graph. Thus, it copies both the BBs in
the and their edges creating an identical copy of the graph
as a whole.

\subsection{Parallel Region Formation}

The generation of static multi-WI work-group
functions involves identifying the regions between barriers
that must be executed by all the work-items before proceeding
to the next region. These regions are referred to as
\textit{parallel regions} or simply \textit{regions} in
the rest of this article.  

The work-items in the work-group can execute the code in the parallel regions 
in any order relative to each other due to the the relaxed consistency model of 
the device memory in the OpenCL 1.2 standard~\cite{OpenCL12}. Thus,
the multi-WI functions can be implemented as ``embarrassingly
parallel'' loops that iterate over all the work-item local ids
with a parallel region as the loop body. The parallel loops (later referred
to as \textit{WI loops} as in work-item loops)
often form the main source of fine-grained parallelism 
available for the parallel computation resources in the targeted device.

The simplest scenario for forming the parallel regions is a 
kernel without barriers. In such a case, creating a work-item loop whose 
body is the CFG of the whole function is sufficient (see 
Figure~\ref{fig:simpleloop}(a)). Thus, a single parallel region
consisting the whole kernel function is formed. The only requirement for 
this to work is that the original CFG has a single entry point; 
(this is always true for the kernel functions as the function can be
entered only from one location), and a single exit point. The latter can 
be achieved by a normalization transformation on the kernel function. 

\begin{figure}
\begin{center}

\begin{picture}(0,0)%
\includegraphics[width=1.0\linewidth]{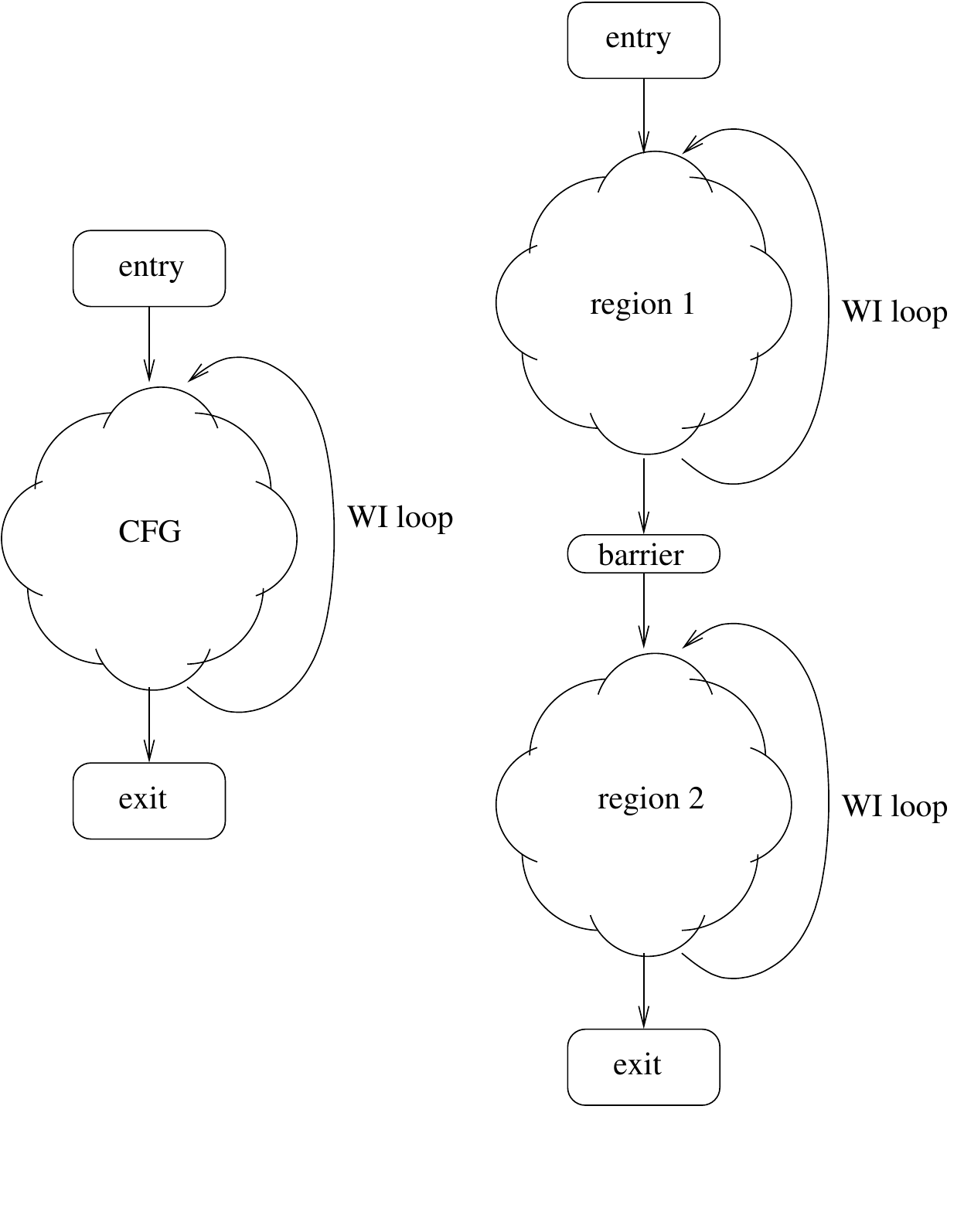}%
\end{picture}%
\setlength{\unitlength}{4144sp}%
\begingroup\makeatletter\ifx\SetFigFont\undefined%
\gdef\SetFigFont#1#2#3#4#5{%
  \reset@font\fontsize{#1}{#2pt}%
  \fontfamily{#3}\fontseries{#4}\fontshape{#5}%
  \selectfont}%
\fi\endgroup%
\begin{picture}(5685,7285)(18,-6884)
\put(3781,-6811){\makebox(0,0)[lb]{\smash{{\SetFigFont{12}{14.4}{\familydefault}{\mddefault}{\updefault}{\color[rgb]{0,0,0}(b)}%
}}}}
\put(856,-6811){\makebox(0,0)[lb]{\smash{{\SetFigFont{12}{14.4}{\familydefault}{\mddefault}{\updefault}{\color[rgb]{0,0,0}(a)}%
}}}}
\end{picture}

\caption{Two basic cases of static work-group function generation: A kernel \newline
(a) without work-group barriers and (b) with an unconditional barrier in 
the middle.}
\label{fig:simpleloop}
\end{center}
\end{figure}


The parallel region formation for kernels with barriers is more complex. 
In the following, 
the work-group barriers are classified to
two categories. If a barrier is reached in all the execution paths of the 
kernel control flow, that is, if the barrier \emph{dominates} the exit node, 
we call it an \emph{unconditional barrier}. In case the barrier is placed
inside a conditional BB such as an if\ldots else structure or a for-loop
(the barrier does \emph{not dominate} the exit node), we call it
a \emph{conditional barrier}. 
Unconditional barriers create separate
\emph{parallel regions}, sections of the
CFG which the different work-items can execute
in parallel. In Figure~\ref{fig:simpleloop}(b),
the unconditional barrier divides the whole CFG
into two regions. In order to comply with the barrier semantics,
no work-item should execute the \emph{region 2} until
all of them have finished executing the \emph{region 1}.
Thus, two WI loops must be created, one iterating 
over each parallel region; one before, and one after 
the barrier.
The parallel region formation algorithm for kernels with
only unconditional barriers is given in Algorithm~\ref{alg:uncondBarrierRegionFormation}.

\begin{algorithm}
\begin{enumerate}

\item Ensure there is an implicit barrier at the entry and the exit nodes
of the kernel function and that there is only one exit node in the
kernel function. This is a safe starting condition as it does not affect
any execution order restrictions.

\item Perform a depth-first-search traversal of the kernel CFG\@.
Ignore the possible back edges to avoid infinite loops and to
include the loops of the kernel to the parallel region.

\item When encountering a barrier, create a parallel region by
calling \textit{CreateSubgraph} for the previously encountered barrier
and the newly found barrier.

\end{enumerate}
\caption{Parallel region formation when the kernel does not contain conditional barriers.}
\label{alg:uncondBarrierRegionFormation}
\end{algorithm}

\subsection{Handling of Conditional Barriers}

The algorithm for parallel region formation described so far can only
handle kernels which either have no barrier synchronization at all, 
or have only unconditional barriers. 
According to OpenCL specification, "the work-group barrier must be encountered by all work-items
of a work-group executing the kernel or by none at all"~\cite{OpenCL12}.
In order to describe the way how \pocl handles kernels with conditional
barriers, a few new definitions are needed.

\begin{figure}
\begin{center}
\begin{picture}(0,0)%
\includegraphics[width=1.0\linewidth]{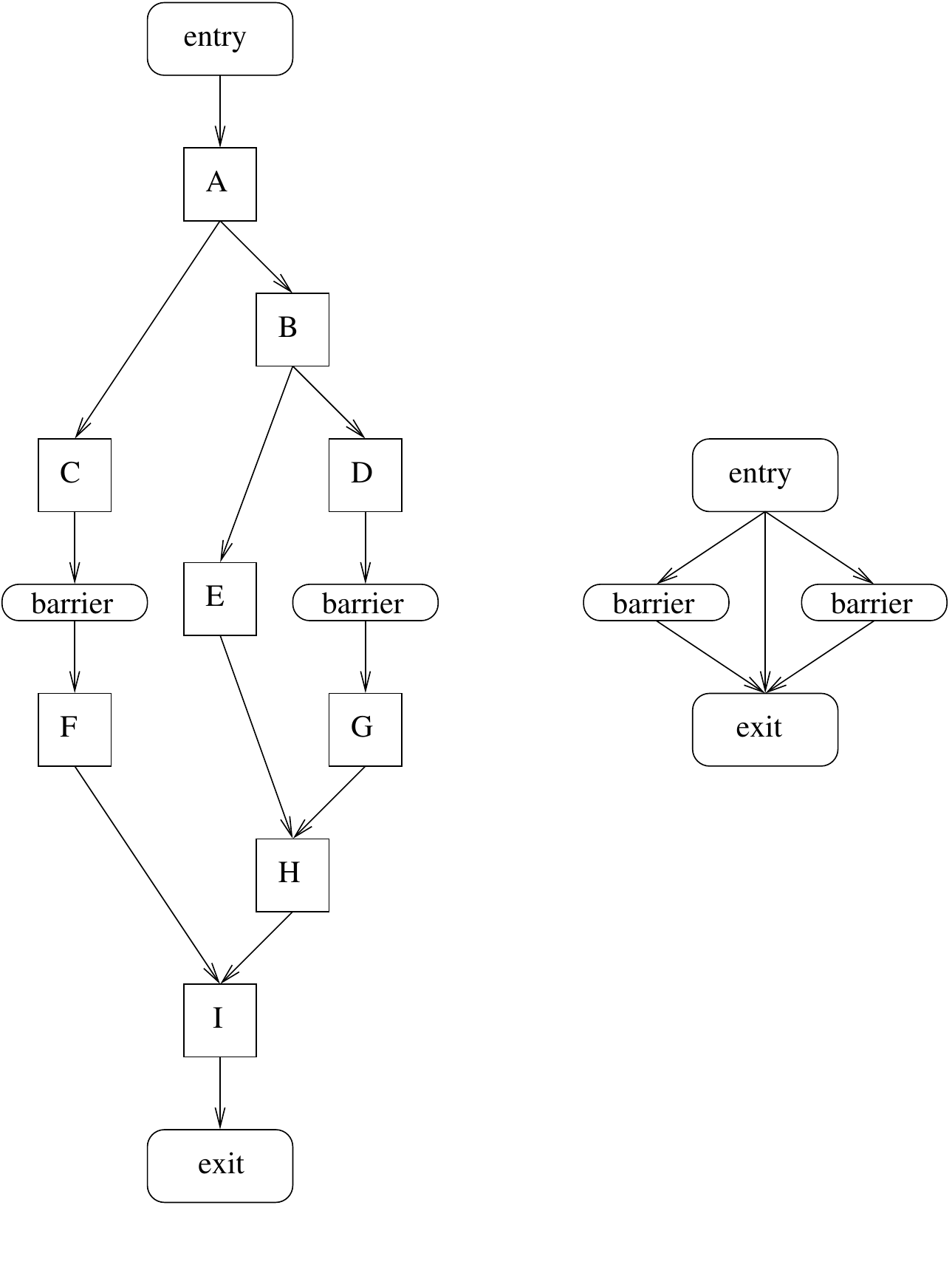}%
\end{picture}%
\setlength{\unitlength}{4144sp}%
\begingroup\makeatletter\ifx\SetFigFont\undefined%
\gdef\SetFigFont#1#2#3#4#5{%
  \reset@font\fontsize{#1}{#2pt}%
  \fontfamily{#3}\fontseries{#4}\fontshape{#5}%
  \selectfont}%
\fi\endgroup%
\begin{picture}(5874,7870)(889,-7019)
\put(5536,-6946){\makebox(0,0)[lb]{\smash{{\SetFigFont{12}{14.4}{\familydefault}{\mddefault}{\updefault}{\color[rgb]{0,0,0}(b)}%
}}}}
\put(2161,-6946){\makebox(0,0)[lb]{\smash{{\SetFigFont{12}{14.4}{\familydefault}{\mddefault}{\updefault}{\color[rgb]{0,0,0}(a)}%
}}}}
\end{picture}%

\caption{(a) An example CFG with two conditional barriers and
  (b) its reduced barrier CFG\@.}
\label{fig:cfg}
\end{center}
\end{figure}


%

\begin{definition}[Barrier CFG] A reduced CFG with all the
non-barrier instructions and basic blocks eliminated. An example
barrier CFG is shown in Figure~\ref{fig:cfg}(b). 
The barrier CFG
is formed by producing a graph with only the barrier, exit and
entry nodes of the original CFG\@. There is an edge between two nodes if and 
only if there is a direct (no-barrier) path between the two nodes in the 
original CFG. Exit and entry nodes contain implicit barriers.

\end{definition}

\begin{definition}[Predecessor barrier] Given a barrier $b$, its
predecessor barriers are all the barriers which can be visited 
in a path leading to $b$ from the entry node.
They correspond to predecessor nodes in the reduced Barrier CFG\@.
Every barrier except the implicit barrier at the entry node
has at least one predecessor barrier.
\end{definition}

\begin{definition}[Successor barrier] Given a barrier $b$, its
successor barriers are all the barriers that might be reached on
any path from $b$ to the exit node.
They correspond to successor nodes in the reduced Barrier CFG\@.
Every barrier except the implicit barrier at the exit node 
has at least one successor barrier.
\end{definition}

\begin{definition}[Immediate predecessor barrier] A barrier node
preceding a given barrier node in the Barrier CFG\@.
\end{definition}

\begin{definition}[Immediate successor barrier] A barrier node
succeeding a given barrier node in the Barrier CFG\@.
\end{definition}

Assuming there is at least one exit node in the kernel function, we can 
state that:

\begin{proposition}
If there is a conditional barrier in a kernel CFG, then there is
at least one other barrier which has more than one immediate
predecessor barrier.

\label{pro:condbarriers}
\end{proposition}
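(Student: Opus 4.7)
The plan is to exhibit a concrete witness: a barrier $b_s$, distinct from the given conditional barrier $b_c$, whose in-degree in the reduced Barrier CFG is at least two. Intuitively, because there are paths from entry to exit that go through $b_c$ and also paths that avoid $b_c$, these two families must merge at some barrier, and the merge point will inherit an immediate predecessor from each family. First I would observe that the Barrier CFG preserves the relevant reachability information: any walk in the original CFG projects onto a walk between the barriers it visits, so $b_c$ failing to dominate the exit in the original CFG translates directly into the existence of a Barrier CFG path from the entry barrier $e$ to the exit barrier $x$ that avoids $b_c$. The barrier $b_c$ is itself reachable from $e$ and reaches $x$, so both families of paths are nonempty.

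Next I would define $S$ to be the set of barriers that are both reachable from $b_c$ in the Barrier CFG and reachable from $e$ by some path avoiding $b_c$. Conditionality of $b_c$ puts $x \in S$, and clearly $b_c \notin S$. Choose $b_s \in S$ minimizing the Barrier CFG distance from $b_c$, and fix a shortest path $b_c = b_0 \to b_1 \to \cdots \to b_k = b_s$. By minimality, every intermediate $b_i$ with $1 \le i \le k-1$ is outside $S$; being reachable from $b_c$ along the path, each such $b_i$ must fail the second condition, i.e.\ cannot be reached from $e$ without going through $b_c$. On the other hand, since $b_s \in S$, pick a Barrier CFG path from $e$ to $b_s$ that avoids $b_c$, and let $b'$ denote its last barrier before $b_s$.

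Both $b_{k-1}$ and $b'$ are then immediate predecessor barriers of $b_s$. They are distinct: $b'$ lies on a path from $e$ avoiding $b_c$ (so in particular $b' \neq b_c$), whereas $b_{k-1}$ is either $b_c$ itself (when $k = 1$) or, by minimality, only reachable from $e$ through $b_c$; either way $b' \neq b_{k-1}$. Since $b_s \neq b_c$, this yields the required other barrier with more than one immediate predecessor barrier. I expect the main obstacle to be exactly this distinctness step: a less careful choice of the merge point (such as the immediate successor of $b_c$, or $x$ itself without further analysis) can let the two candidate predecessors collapse, and it is the minimality criterion defining $b_s$ that forces them apart.
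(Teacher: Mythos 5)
Your argument is correct, but it follows a genuinely different route from the paper's. The paper works with the dichotomy between conditional and unconditional barriers: it first argues (as you also must, implicitly) that some conditional barrier $c_j$ has an edge to an unconditional barrier $u_i$ in the Barrier CFG, since otherwise no conditional barrier could reach the exit; it then observes that if that edge were the only one into $u_i$, then $c_j$ would dominate $u_i$, hence dominate the exit by transitivity, contradicting $c_j$ being conditional. The witness is thus an unconditional immediate successor of some conditional barrier, and the whole distinctness question is discharged in one line by dominance. Your proof instead ignores the classification of all barriers other than $b_c$ and uses only reachability: you locate the first merge point $b_s$ of the ``through $b_c$'' and ``around $b_c$'' path families, and the minimality of $b_s$ is what forces the two candidate immediate predecessors apart. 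This is more elementary (no appeal to dominance of other nodes) and correctly identifies where the real work lies --- a careless choice of merge point does let the two predecessors coincide --- at the cost of a longer distinctness argument. Two small points to make explicit: your claim that $b_c$ reaches the exit barrier is exactly the step the paper justifies semantically (a reachable barrier with no path to the single exit would be dead code or imply an infinite loop, hence undefined behaviour), so it deserves the same sentence of justification rather than being asserted; and you should note that $b_s$ cannot be the entry barrier (the entry block has no predecessors), so the predecessor $b'$ on the avoiding path actually exists.
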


\begin{proof}

Let $U=\{u_i\}$ be the set of all the unconditional barriers in the
Barrier CFG and $C=\{c_j\}$ the non-empty set of all the conditional barriers.
The implicit barrier on the exit node has to be in $U$ (exit node
is not conditional), thus there is at least one edge $e$ from
a node $c_j \in C$ to a node $u_i \in U$, otherwise
there would be no path from any conditional barrier to the
exit node.  This would make all the nodes in $C$ dead or
unreachable basic blocks because if a node in $C$ is executed
at least by one work-item, all work-items must execute it
after which the control shall proceed. Otherwise, as there is
only one exit node in the kernel CFG, there must be an infinite loop after 
the conditional barrier and the kernel outcome is undefined.
Moreover, $e$ can not be the only edge leading to $u_i$, as then
$c_j$ would dominate $u_i$, which could only happen if both $c_j$ and $u_i$ were
of the same kind (conditional or unconditional).
Hence, there are at least two different edges leading
to $u_i$ in the Barrier CFG, thus $u_i$ has at least two
immediate predecessor barriers.  
\end{proof}

%
%


A barrier with two predecessor barriers makes it impossible
to apply Algorithm~\ref{alg:uncondBarrierRegionFormation} for 
forming the parallel regions.
According to the simple algorithm, upon reaching a conditional barrier, 
a parallel region should be formed between the preceding barrier and 
the previously reached one, but in this case there would be ambiguity 
on which one is the preceding barrier. For example, in Figure~\ref{fig:cfg}(a),
when reaching the exit node, the work-item loop iterating over
the parallel region might have to
branch back to either \emph{A}, \emph{F}, or \emph{G}, depending
on the execution path chosen by the first work-item (which the
other work-items must follow, according to the OpenCL work-group
barrier semantics).

In order to form a single entry, single exit parallel
regions in the presence of conditional barriers, we
apply a variant of \textit{tail duplication}~\cite{TailDuplication} to the set of 
basic blocks reachable from the conditional barrier at hand.
This produces a new CFG with the same behavior
as the original CFG, but in which each barrier can have
only one immediate predecessor barrier, enabling the
single entry single exit parallel region formation
similarly as with unconditional barriers.
The used tail duplication process is described in 
Algorithm~\ref{alg:tailreplication}.

\begin{algorithm}[h!]

\begin{enumerate}
\item Perform a depth-first traversal of the CFG, starting at the entry node.
\item Each time a new, unprocessed conditional barrier is found, use \textit{CreateSubgraph}
to produce a sub-CFG from that barrier to the next exit node (duplicate
the tail).
\item Replicate the created sub-CFG using \textit{ReplicateCFG}.

In order to
reduce code duplication, merge the tails from the same unconditional barrier 
paths. That is, replicate the basic blocks only after the last barrier that 
is unconditionally reachable from the one at hand.
\item Start the algorithm again at each of the found barrier successors.
\end{enumerate}

\caption{Tail duplication for parallel region formation in the case of
conditional barriers in the kernel.}
\label{alg:tailreplication}
\end{algorithm}

The result of applying tail replication to the example CFG
in Figure~\ref{fig:cfg} is shown in Figure~\ref{fig:aftertails}(a). From
its reduced barrier CFG (Figure~\ref{fig:aftertails}(b)) it can be seen
that no barrier has more than one immediate predecessor barrier after
this transformation has been performed, thus making the parallel
region formation unambiguous.

\begin{figure}
\begin{center}
\begin{picture}(0,0)%
\includegraphics[width=1.2\linewidth]{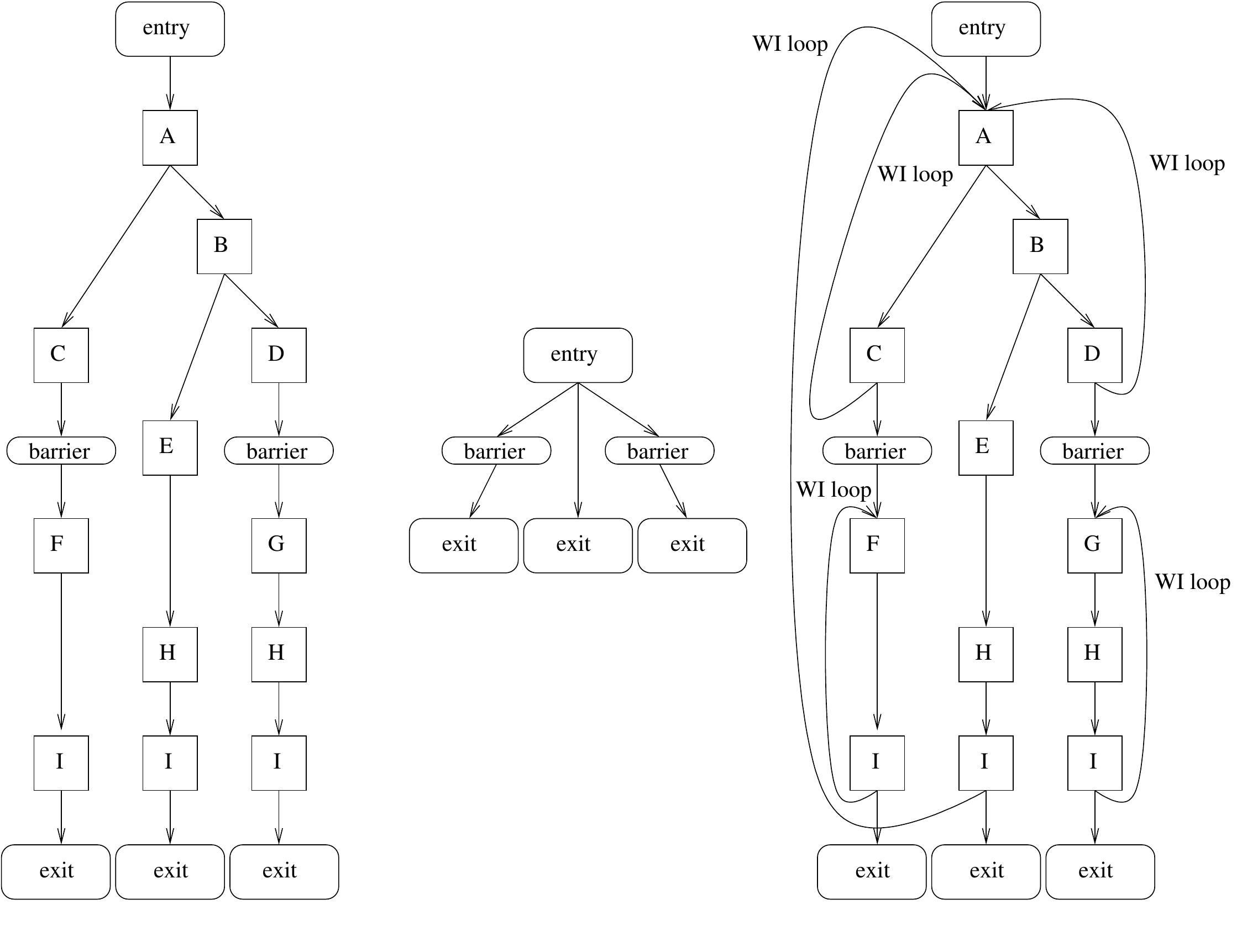}%
\end{picture}%
\setlength{\unitlength}{4144sp}%
\begingroup\makeatletter\ifx\SetFigFont\undefined%
\gdef\SetFigFont#1#2#3#4#5{%
  \reset@font\fontsize{#1}{#2pt}%
  \fontfamily{#3}\fontseries{#4}\fontshape{#5}%
  \selectfont}%
\fi\endgroup%
\begin{picture}(10259,7870)(844,-7019)
\put(6000,-6946){\makebox(0,0)[lb]{\smash{{\SetFigFont{12}{14.4}{\familydefault}{\mddefault}{\updefault}{\color[rgb]{0,0,0}(c)}%
}}}}
\put(3600,-6946){\makebox(0,0)[lb]{\smash{{\SetFigFont{12}{14.4}{\familydefault}{\mddefault}{\updefault}{\color[rgb]{0,0,0}(b)}%
}}}}
\put(1800,-6946){\makebox(0,0)[lb]{\smash{{\SetFigFont{12}{14.4}{\familydefault}{\mddefault}{\updefault}{\color[rgb]{0,0,0}(a)}%
}}}}
\end{picture}%

\caption{(a) Example CFG after tail replication,
  (b) its reduced barrier CFG, and
  (c) parallelized version.}
\label{fig:aftertails}
\end{center}
\end{figure}

It should be noted that the resulting tail replicated graph has irreducible 
loops~\cite{CFGReducibility}; multiple work-item loops share the same 
basic blocks which leads to branches from a work-item loop to another. 
For example, the basic blocks
$A$, $B$ and $D$ form a parallel region and from $B$, there's a branch to 
the middle of another parallel region's ($ABEHI$) work-item loop. 
Removing branches from a work-item loop to another can be done by leaning on the 
definition of the OpenCL C work-group barriers: if at least one work-item
takes the branch after $B$ that can lead to a barrier, the rest of the work-items 
must follow. This fact can be exploited by ``loop peeling'' the first
iteration of the work-item loop. This iteration is then the only one that
evaluates the work-item dependent condition that chooses which parallel region
should be executed by the rest of the work-items. 
Figure~\ref{fig:afterpeeling} depicts the CFG after loop peeling has
been applied to the conditional barrier parallel regions. The peeled basic
blocks are marked with dashed outline boxes. The peeled paths select the
parallel region work-item loop that is then executed with the branch
selecting the conditional barrier removed. The benefit for parallelization
is apparent; for static multi-issue ILP targets the work-item loops
contain now longer branchless traces from which to issue instructions to
the parallel function units. In general, longer branchless traces
produce more freedom to the compiler instruction scheduler which helps
to hide latencies. 

\begin{figure}
\begin{center}
\includegraphics[width=0.3\textwidth]{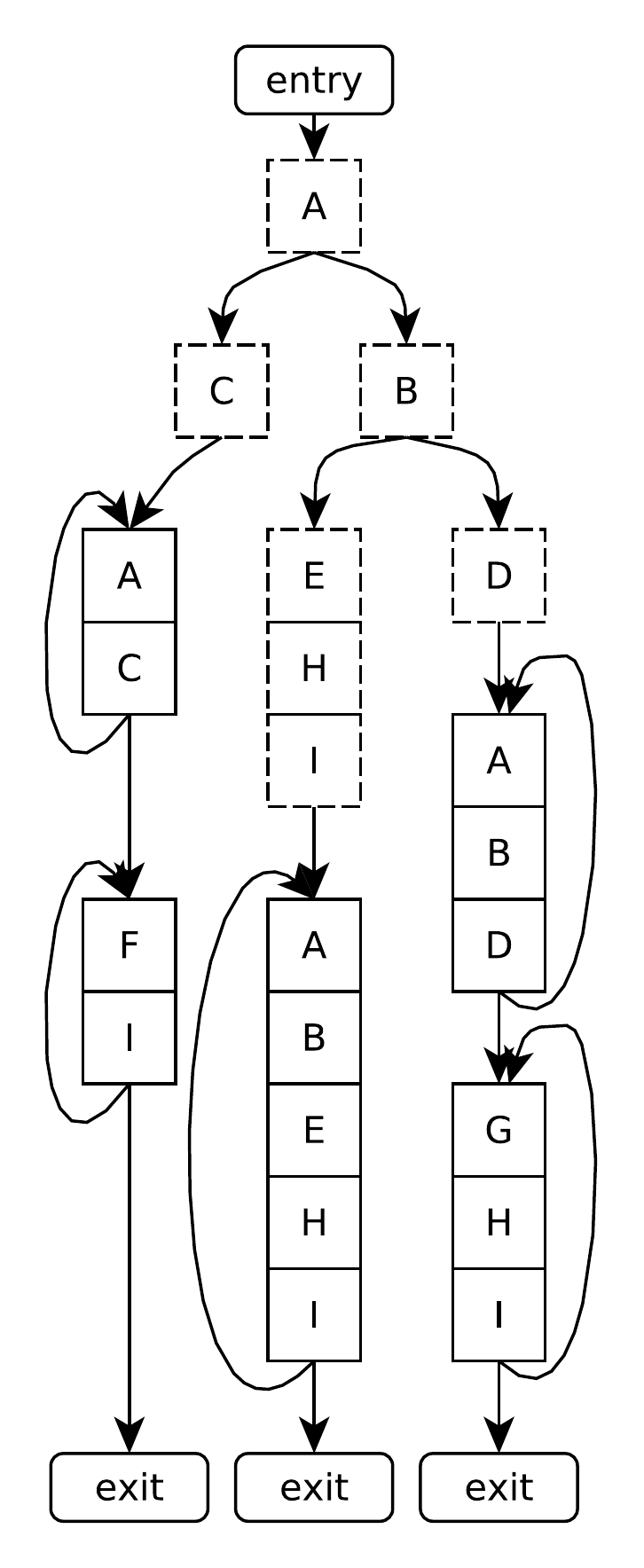}
\caption{The kernel CFG after loop peeling applied to remove irreducible control 
flow from the work-item loops. The ``peeled'' basic blocks are marked with dashed
boxes. This CFG does not contain the explicit barrier markers
as the work-item loops itself implement the work-group barrier semantics.}

\label{fig:afterpeeling}
\end{center}
\end{figure}

\subsection{Barriers in Kernel Loops}

OpenCL allows kernel loops to have barrier synchronization inside loops. 
The semantics of a loop with a barrier (later referred to as \textit{b-loops}) is 
similar to the conditional barriers: if one work-item reaches the barrier, 
the rest of them have to. The barrier call at each kernel loop iteration is 
considered to be a separate barrier instance, that is, the barrier of each iteration 
must be reached by all the work-items before proceeding to the next iteration.
The parallel region formation for b-loops can be reduced to the ``regular''
parallel region formation case by adding certain implicit barriers to the
loop construct. The implicit barriers are added using the following
assumptions:

\begin{enumerate}
 \item All OpenCL kernel loops can be converted to natural canonical loops which have 
a single entry node, the loop \textit{header}, that computes the loop condition 
and just one loop \textit{latch} which jump back to the loop header.
This can be assumed because the OpenCL standard declares kernels with irreducible 
control flow implementation-defined~\cite{OpenCL12} and it is possible to
convert irreducible loops (e.g. those produced by an earlier optimization)
to reducible loops, e.g., via node 
splitting~\cite{ControlledNodeSplitting}. Additional transformations (included
in LLVM passes) can canonicalize loops, ensuring that they have exactly one
back edge.

 \item All work-items execute the iterations of b-loops in lock-step, one
parallel region at a time. Thus, the loop iteration count is the same for all
work-items executing the b-loop. 

 \item If the b-loop has early exits, they have been converted to converge
to a single loop exit basic block.

\end{enumerate}

With the above assumptions, the following implicit barriers can be 
added in order to ensure unambiguous parallel region formation 
for b-loops:

\begin{enumerate}

 \item End of the loop pre-header block. This is the single block preceding the
loop header. That is, synchronize the work-items just before entering the b-loop.

 \item Before the loop latch branch. The original loop latch branch is
retained, thus a parallel region must be formed before it and the original loop
branch preserved.

 \item After the \textit{PhiNode} region of the loop header block. This creates a parallel
region for updating the induction variables and other loop-carried variables in the
original kernel. 

Due to the b-loop iteration-level lock step semantics, the induction 
variable updates are redundant for all the work-items and can be combined by 
the standard \textit{common subexpression elimination}~\cite{GCSE} optimization implemented
by the LLVM\@. Depending on the target, however, the induction variables of the work-items 
might not be beneficial to be combined to a single variable, but duplicated, to avoid the need
to broadcast the single induction variable across all the vector lanes.
\end{enumerate}

\begin{figure}
\begin{center}
\includegraphics[width=0.6\textwidth]{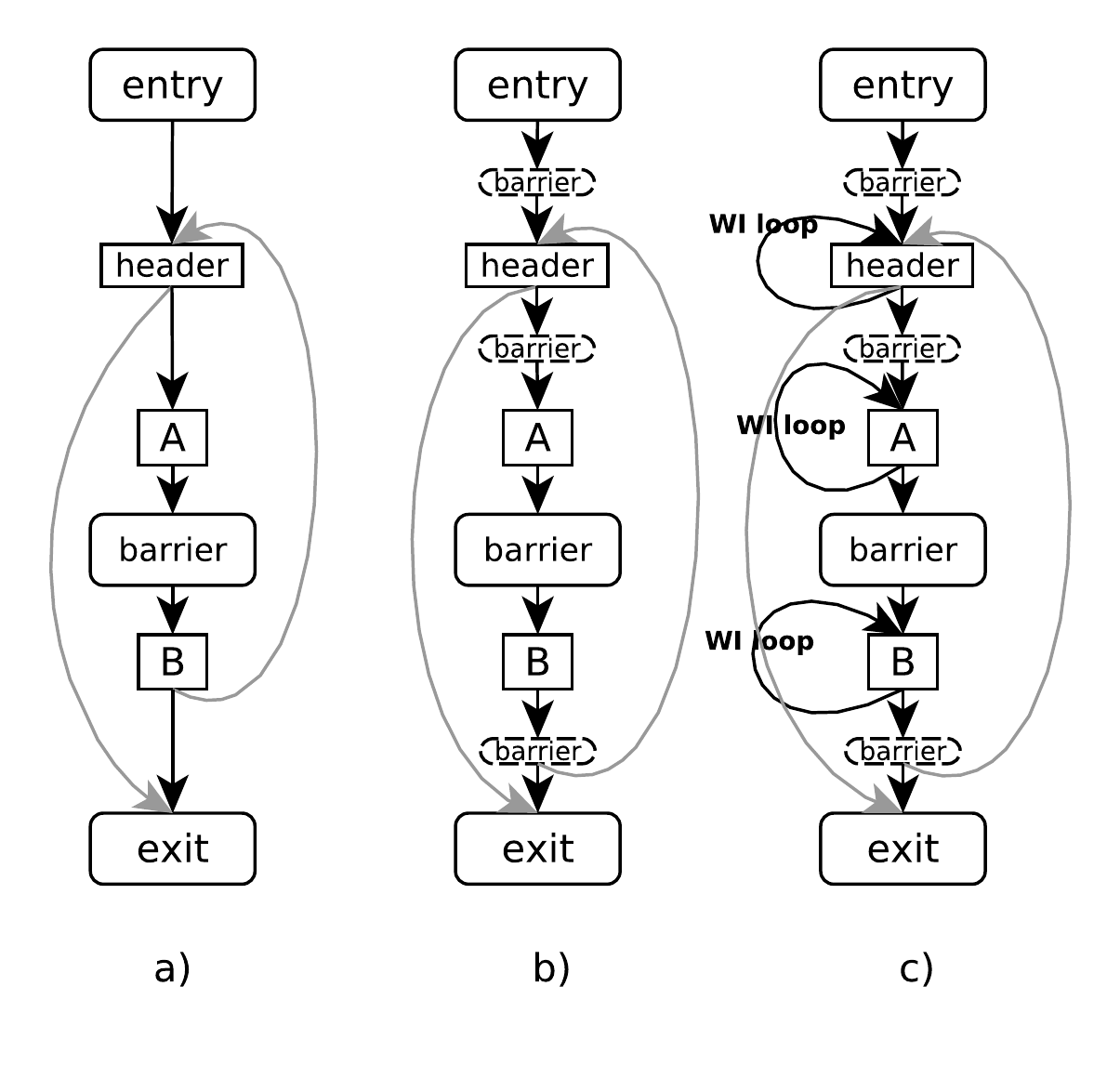}
\caption{Adding implicit barriers to kernels with b-loops to produce
unambiguous parallel regions;
a) the original single work-item kernel CFG with the b-loop,
b) the kernel CFG with implicit barriers added to make parallel 
region formation unambiguous, and
c) the work-group function CFG with the work-item loops added to 
iterate the parallel regions.
The original kernel loop edges are colored grey.
}
\label{fig:implicitLoopBarriers}
\end{center}
\end{figure}

Figure~\ref{fig:implicitLoopBarriers} shows how the implicit barriers
direct the parallel region formation in a kernel with a b-loop. The explicit
(programmer-defined) barrier is shown with a solid outline box, and the
implicit barriers added by the compiler are highlighted with
dashed boxes. It should be emphasized that the original b-loop branches 
in the single work-item kernel (the gray edges in Fig.~\ref{fig:implicitLoopBarriers}) 
are not replicated during the work-group function generation. This enforces 
the semantics of the iteration level lock step execution of b-loops: When a single
work-item stops iterating the loop or begin a new iteration, so shall the others.

\subsection{Horizontal Inner-loop Parallelization}
\label{sec:horizontalLoopParallelization}

The loop constructs in OpenCL C kernel descriptions, written
by the programmer, are like C loops with sequential execution semantics. 
Therefore, in order to parallelize the loops the same loop carried dependency
analysis as in sequential programs is needed. 
In case of multi-WI work-groups, these ``inner loops'' can be
sometimes parallelized ``horizontally'' across work-items in the work
group, thus leading to a more easily parallelized program (the
work-item loop is a parallel loop). In other words, the loop iterations
could be executed in lock step for each work-item before progressing
to the next iteration. 
For example, the imaginary kernel in Fig.~\ref{fig:innerLoop} does not 
parallelize well without extra treatment.

\begin{figure}
\begin{minipage}{.3\linewidth}

\begin{verbatim}
 
__kernel
void DCT(__global float * output,
         __global float * input,
         __global float * dct8x8,
         __local  float * inter,
         const    uint    width,
         const    uint    blockWidth,
         const    uint    inverse)
{
    /* ... */
/* parallel_WI_loop { */
    for(uint k=0; k < blockWidth; k++)
    {
        uint index1 = (inverse)? i*blockWidth + k : k * blockWidth + i;
        uint index2 = getIdx(groupIdx, groupIdy, j, k, blockWidth, width);

        acc += dct8x8[index1] * input[index2];
    }
    inter[j*blockWidth + i] = acc;
/* } */
    barrier(CLK_LOCAL_MEM_FENCE);
    /* ... */
}
\end{verbatim}

\end{minipage}

\caption{A kernel with inner loops; a snippet from the DCT kernel of 
the AMD OpenCL SDK code sample suite. Note how the work-item loop 
surrounds the inner-loop which constitutes a parallel region.}
\label{fig:innerLoop}
\end{figure}

The variable loop iteration count makes parallelism extraction hard as the inner loop
cannot be unrolled to increase the number of parallel operations within
one work-item. However, if the inner loop was treated like a loop with a barrier 
inside, the parallelization would be done across the work-items, effectively
leading to a structure as shown in Fig.~\ref{fig:innerLoopHorizontal}.
Thus, the desired end result is a loop interchange between the
inner loop and the work-item loop surrounding that parallel region.

\begin{figure}[!b]
\begin{minipage}{.3\linewidth}

\begin{verbatim}
    for(uint k=0; k < blockWidth; k++)
    {
     /* parallel_WI_loop { */
        uint index1 = (inverse)? i*blockWidth + k : k * blockWidth + i;
        uint index2 = getIdx(groupIdx, groupIdy, j, k, blockWidth, width);

        acc += dct8x8[index1] * input[index2];
     /* } */
    }
\end{verbatim}

\end{minipage}

\caption{A kernel with the inner loop horizontally parallelized.
Here the work-item loop surrounds the inner-loop body, yielding 
a nicely parallelized region.
}
\label{fig:innerLoopHorizontal}
\end{figure}

The legality of this transformation is similar to the legality of
having a barrier inside the loop; all of the work-items
have to iterate the loop the same amount of times. Therefore, additional
\textit{divergence and variable uniformity analysis} is needed in order 
to add such implicit barriers that enforce the horizontal parallelization. 

The uniformity analysis resolves the origin of the variables in the LLVM IR\@. 
The operands of the producer instruction of the variable are recursively
analyzed until a known \textit{uniform} root is found. Uniform variable 
is one that is known to contain the same value for all the work-items
in the work-group. Such a uniform root is usually a constant or a kernel argument.
The uniformity analysis is used to prove that the loop exit condition 
nor the predicates in the path leading to the loop entry do not
depend on the work-item id. That is, the work-item execution does not
\textit{diverge} in such a way that the implicit barrier insertion would
be illegal. Only then the implicit loop barrier is inserted
to enforce the horizontal inner loop parallelization.

\subsection{Handling of Kernel Variables}

Variables of two different scope can be defined in OpenCL C kernel
functions: The per work-item \textit{private} variables and 
the \textit{local} variables which are shared among all 
the work-items in the same work-group. 
While the \textit{private} variables are always allocated in the
OpenCL kernel function definition, there are two ways to allocate \textit{local} 
variables in OpenCL kernels: 
From the host side through the \textit{clSetKernelArg} API (a local
buffer argument in the kernel function), and from the kernel side through 
``automatic local variables'' (variables prefixed in the OpenCL C
description with the \textit{local} address space qualifier). Both
of these cases are handled similarly by pocl by converting the latter case of
automatic locals to an additional work-group function
argument with a fixed allocation size. The additional work-group function
argument for automatic locals is visible in the example kernel of 
Fig.~\ref{fig:kernelCompilationMethods}: A third function argument has been 
added for storing the automatic float array of size four.

What should be noted is 
that \textit{local} data is actually thread-local data from the
point of view of the implementation when multiple work-groups are executed
in parallel in multiple device threads sharing the same physical address
space where the local data is stored. In order to
ensure thread safety, e.g.\ the pthread device driver of pocl
handles all local data by allocating the required local buffers 
in the ``kernel launcher thread'' which calls the work-group function. The
same local space is reused across the possible multiple work-groups
executing in the same device thread.

Private variables, however, need additional processing during the
work-group function generation. As the
original kernel function describes the functionality of a single
work-item, the private variables in the produced multi-WI 
work-group function need to be replicated for all the work-items.
In another point of view, if one considers each work-item to be an independent 
thread of execution, each of the threads must have their own 
separate private context that needs to be used during the execution.
The straightforward way to produce such context space for the work-items is to 
create a \textit{context data array} for each original private variable.
In this array, an element stores the private variable for a single work-item.
Thus, as many elements as there are work-items in the work-group are needed.

Private variables have different life times that affect the
need to store them in a context data array. Some of the private 
variables are used only within one parallel region while some
span multiple regions. In case the lifetime does
not span multiple parallel regions, there is no need to
create a context array for it as the variable is used only during the execution of 
the work-item loop iteration. Such variables can be sometimes 
allocated to registers for their whole lifetime instead of storing them 
to memory.
Fig.~\ref{fig:privateVarLifeSpans} presents the two cases in
a simple kernel which has two parallel regions due to the
barrier in the middle. Variable $a$ is used only in the
first parallel region, thus, it can stay as a scalar within
the produced work-item loop. In contrast, $b$ is used also 
in the latter parallel region and has to be stored in a context
array.
In order to exploit the varying variable lifespans, each private 
variable is examined and if it is used on at least one 
parallel region different from that in which it is defined, a context
array is created. Then, all uses of the variable are replaced
by uses of an element of the newly created array. This analysis is
straightforward in the SSA format; each variable assignment 
defines a new virtual variable of which uses can be found quickly.

\begin{figure}
\begin{center}
\includegraphics[width=0.9\textwidth]{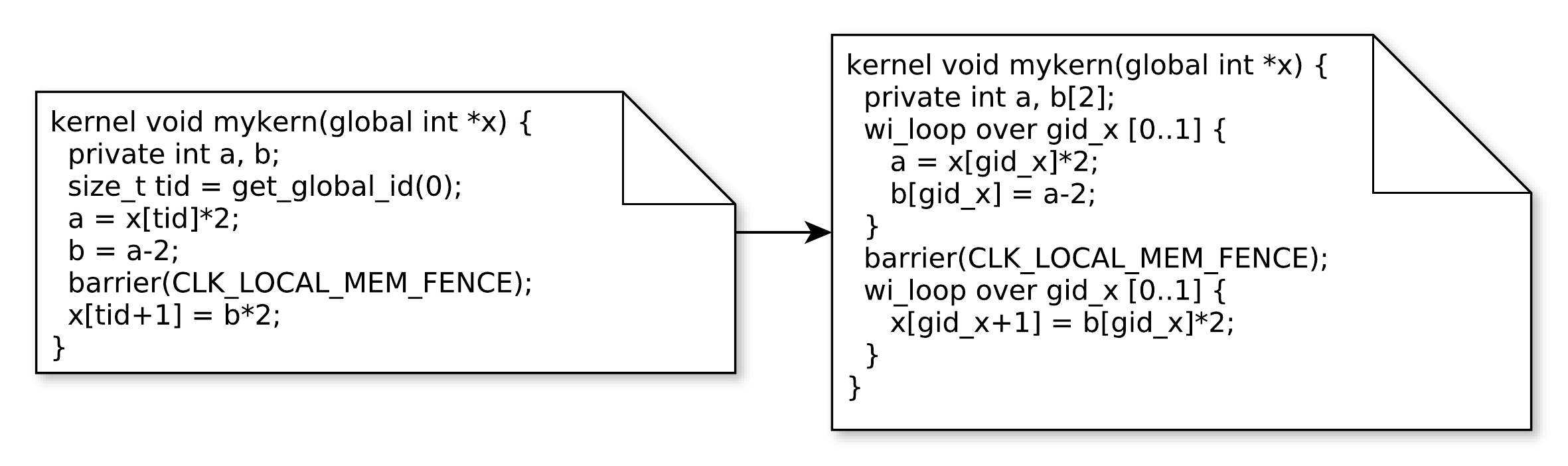}
\caption{Two cases of private variable lifespans: Variable
$a$ is a temporary variable used only in one parallel region, 
while $b$ spans across two parallel regions. The work-group
function generation result is presented in pseudo code for 
clarity; in reality the processing is done on the LLVM
internal representation and the actual variables seen by
the kernel compiler might not match the ones in the input 
kernels due to earlier optimizations.}

\label{fig:privateVarLifeSpans}
\end{center}
\end{figure}

Additional optimization the kernel compiler performs on the private 
variables of the work-group functions is the merging of uniform variables.
The idea is similar to the \textit{Loop-Invariant Code Motion (LICM)~\cite{Aho86}}: 
sometimes the work-items in the work-item loop use variables that are 
invariant, i.e., the value does not change per work-item. In such cases, 
context data space can be saved by merging the variables to a single scalar 
variable that is shared across the work-items. If this is left to 
a later LICM optimization on the work-item loop, it might not succeed 
due to the need to analyze the accesses to the context array locations to 
prove the values are the same. 

The kernel compiler uses the same uniformity analysis as was described 
in Section~\ref{sec:horizontalLoopParallelization} to detect and merge
such variables. In some cases this optimization is counter-productive 
in case it leads to the need to broadcast values across the lanes of 
SIMD-based machines, which might be expensive. In that case it 
can be more efficient to also replicate the uniform values just 
to avoid the communication costs. Taking advantage of this 
machine-specific property is left for future work.

\section{Vectorized Mathematical Library Functions}
\label{sec:MathFunctions}

OpenCL extends the usual mathematical elemental library functions found in C
(e.g.\ sin, cos, sqrt) to accept vector arguments as well. To achieve
good performance for computationally-bound kernels, efficient, 
vectorized implementations for these are needed.
%
We designed \textit{Vecmathlib}~\cite{web:vecmathlib} as a pocl
sub-system to address
this need. Vecmathlib provides efficient, accurate, tunable, and most
importantly vectorized mathematical library functions. 
It seeks to design new algorithms for calculating elemental functions
that execute efficiently when interspersed with other application
code. This is in contrast to many other libraries, such as e.g. IBM's
ESSL or Intel's VML, which are designed to be called with arrays of
many (thousands) of elements at once.

Vecmathlib is implemented in C++, and intended to be called on SIMD
vectors, e.g.\ those provided by SSE or AVX instruction sets, or
available on ARM, Power7, and Blue Gene architectures. The same algorithms
also work efficiently on accelerators such as GPUs. Even for scalar
code, Vecmathlib's algorithms are efficient on standard CPUs.

Vecmathlib consists of several components:
\begin{itemize}
\item Type traits, defining properties of the available floating-point
  types (such as half, float, double) and their integer equivalents
  (short, int, long), extending \texttt{std::numeric\_limits};
\item Templates for SIMD vector types over these floating-point types,
  called \texttt{realvec<typename T, int D>};
\item Generic algorithms implementing mathematical functions; these
  algorithms act on SIMD vectors to ensure they are efficiently
  vectorized;
\item Particular vector type definitions depending on the system
  architecture, providing e.g.\ \texttt{realvec<double,2>} if Intel's
  SSE2 instructions are available. These definitions use efficient
  intrinsics (aka machine instructions) if available, or else fall
  back to a generic algorithm.
\end{itemize}
Thus Vecmathlib directly provides efficient vector types for those
vector sizes that are supported by the hardware. Other vector sizes
are then implemented based on these, so that
e.g.\ \texttt{realvec<float,2>} may be implemented via extension to
\texttt{realvec<float,4>} (with two unused vector elements), or
\texttt{realvec<float,8>} operations may be split into two
\texttt{realvec<float,4>} if necessary. This happens transparently, so
that OpenCL's types \texttt{float2} or \texttt{float8} have their
expected properties.

\subsection{Implementation}


Low-level mathematical functions such as fabs, isnan, or signbit are
implemented via bit manipulation. These algorithms currently assume
that floating point numbers use the IEEE layout
\cite{IEEE-Std-754-2008,Goldberg}, which
happens to be the case on all modern floating-point architectures. For
example, fabs is implemented by setting the sign bit to 0.

Mathematical functions where the inverse can be calculated
efficiently, such as reciprocal or square root (where the inverses can
be determined via a simple multiplication), are implemented via
calculating an initial guess followed by an iterative procedure. For
example, $\mathrm{sqrt}(x)$ is implemented by first dividing the
exponent by two via an integer shift operation, and then employing
Newton's root finding method~\cite{Press2007}
via iterating
$ r_{n+1} := ( r_n + x/r_n ) / 2 $
where $r_n$ is the current approximation. This algorithm doubles the
number of accurate digits with every iteration. 

Most mathematical functions, however, are calculated via a range
reduction followed by a polynomical expansion. For example, $\sin(x)$
is calculated by first reducing the argument $x$ to the range $\lb0;
2\pi)$ via the sine function's periodicity, then reducing the range
further to $[0; \pi/2]$ via the sine function's symmetries, and
finally expanding $\sin(x)$ into Chebyshev polynomials~\cite{Press2007} that minimize
the maximum error in this range~\cite{Muller2006}.

\subsection{Vectorizing Scalar Code}
\label{sec:VectorizingScalarCode}

Instead of implementing \emph{vectorized} mathematical functions (that
take vector arguments), it would be advantageous to implement
\emph{vectorizable} functions (that take scalar arguments), and which
would then automatically be vectorized by the compiler.
For example, the SLEEF library \cite{Shibata2010,Shibata2013} takes
this approach. This would
certainly simplify the implementation of Vecmathlib itself. However,
this is unfortunately not possible for the following reason:
the high-level algorithms depend on low-level functions such as
e.g.\ fabs, floor, or signbit. Whether these low-level functions are
provided efficiently by vector hardware, or whether they need to be
calculated via bit manipulation, is architecture dependent. 
%
We assume that LLVM's vectorizer will in the future be able to
vectorize such calls. The logic required for this is exactly the logic
already found in Vecmathlib, so one obvious way to implement this
functionality in LLVM is via utilizing Vecmathlib.

\section{Performance Evaluation}
\label{sec:Evaluation}

For evaluating the current performance of the proposed approach implemented in 
pocl, we used the suite of example applications available
in the \textit{AMD Accelerated Parallel Processing Software Development 
Kit}~\cite{AMD-APP-SDK}. The example
applications in the AMD APP SDK suite allow timing the execution and to iterate
the benchmark multiple times. Multiple execution iterations are used to reduce
cache effects to numbers and to allow the kernel compilers to amortize the 
kernel compilation time across kernel executions.


The benchmark suite was executed on various platforms supported by pocl. 
The same unmodified benchmark suite was also executed using the best found vendor 
implementation of OpenCL for the platform at hand for giving an idea where 
the performance is at in comparison to the most commonly used implementations.
It should be noted that this version of the benchmark has been optimized
for previous generation AMD GPUs with VLIW lanes. For example, many of the 
cases use explicit vector code which has to be scalarized by the pocl kernel
compiler for more efficient horizontal work-group vectorization. 

The processors in the tested platforms and their available parallel computation resource types 
are summarized in Table~\ref{table:benchmarkPlatforms}. Pocl framework exploits the parallel 
resources as follows:
a) thread-level parallelism (TLP); multiple work-groups in multiple hardware threads or cores,
b) instruction-level parallelism (ILP); dynamic or static multi-issue cores enable concurrent 
execution of multiple operations from each parallel region (from the same work-item or from
multiple work-items), and c) data-level parallelism (DLP); SIMD instruction sets allow 
executing either intra-kernel vector instructions directly or lock-step executing 
matching operations from multiple work-items.

\begin{table}
 
\begin{center}
\begin{tabular}{l|c|c|c}
                 & TLP       & ILP          & DLP             \\
\hline
Intel Corei7~\cite{IntelHaswell} & 4 cores, 2 threads each   & 8-issue
out-of-order & AVX2 (8 float, 4 double) \\
ARM Cortex-A9~\cite{ARMCortexA9} & 2 cores & out-of-order & NEON \\
Power Processor Element & 2 threads & 2-issue in-order & AltiVec \\
TTA & n/a & static multi-issue & n/a \\
\end{tabular}
\caption{Different types of parallel computation resources exploited in the tested platforms.
The resources are categorized to the type of parallelism they serve:
thread-level parallelism (TLP), instruction-level parallelism (ILP), 
and data-level parallelism (DLP).
}

\label{table:benchmarkPlatforms}
\end{center}

\end{table}

\subsection{Intel x86-64}

The first evaluated platform is the most popular instruction set architecture used in
current personal computers and work stations, the Intel 64bit x86 architecture. 
For benchmarking this platform 
we used a workstation with an \textit{Intel Core i7-4770}
CPU clocked at 3.4~GHz. The workstation 
had 16~GB of RAM and ran the Ubuntu Linux 12.04 operating system. The kernel execution time 
performance results are given in Fig.~\ref{fig:resultsIntelLinux}. There
were two proprietary OpenCL implementations on the platform we could compare
against, one from AMD and another
from Intel. 
This benchmark set indicates great performance can be achieved using pocl 
despite the fact that there are several performance opportunities that are under implementation.
For several of the benchmark applications pocl already outperforms the available proprietary
implementations. However, a few bad results stick out from the results: 
\textit{BinarySearch} and \textit{NBody}. 
We analyzed the cases and listed the additional optimizations that should 
help to reach the vendor implementation performance also for these cases. 
They are discussed in the Conclusions and Future Work section
later.

\begin{figure}
\begin{center}
\includegraphics[width=0.8\textwidth]{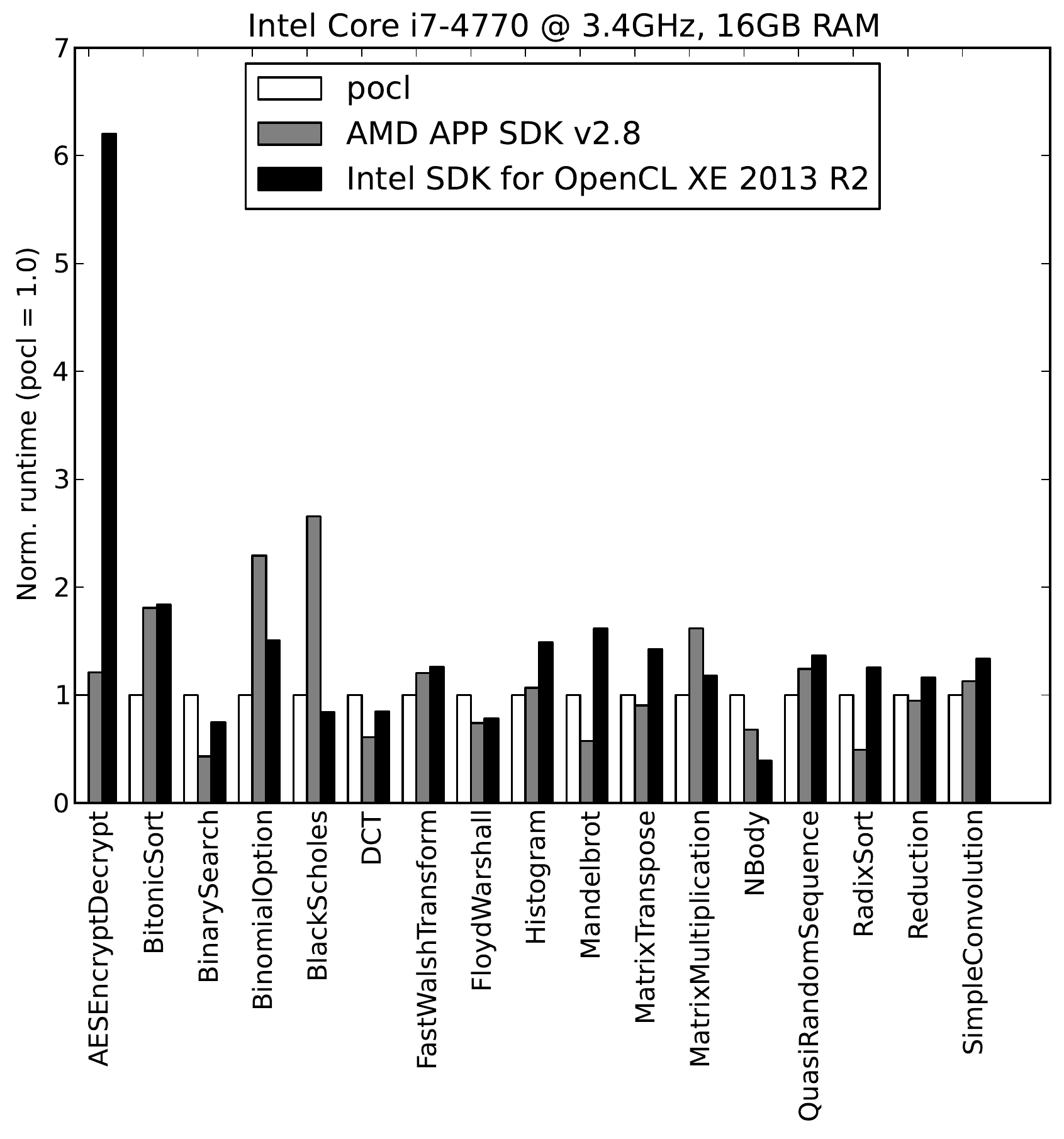}
\caption{Benchmark execution times (smaller is better) with Intel Core i7, 
in a Linux environment. The benchmark runtime achieved with pocl is 
compared to the two available proprietary implementations for the platform.}
\label{fig:resultsIntelLinux}
\end{center}
\end{figure}

\subsection{ARM Cortex-A9}

ARM CPUs are currently the standard choice for general purpose processing in 
mobile devices. We benchmarked the
ARM platform using the PandaBoard with Ubuntu Linux 12.04 installed as the operating
system. The PandaBoard has an \textit{ARM Cortex-A9}~\cite{ARMCortexA9} CPU
which is an out-of-order multiscalar architecture with a NEON~\cite{ARM-NEON} SIMD unit.
The CPU is clocked at 1~GHz, and the platform has 1~GB of RAM.
On this platform we could not compare against a vendor supplied OpenCL implementation 
as ARM does not supply one (as of February, 2013) for their CPUs, but only for 
their Mali GPUs. 
Benchmarking results against FreeOCL~\cite{FreeOCL} (albeit it is not a
performance-oriented implementation) are shown in Fig.~\ref{fig:resultsCortexA9}.
The BinomialOption test case failed to work with FreeOCL.

\begin{figure}[t]
\begin{center}
\includegraphics[width=0.8\textwidth]{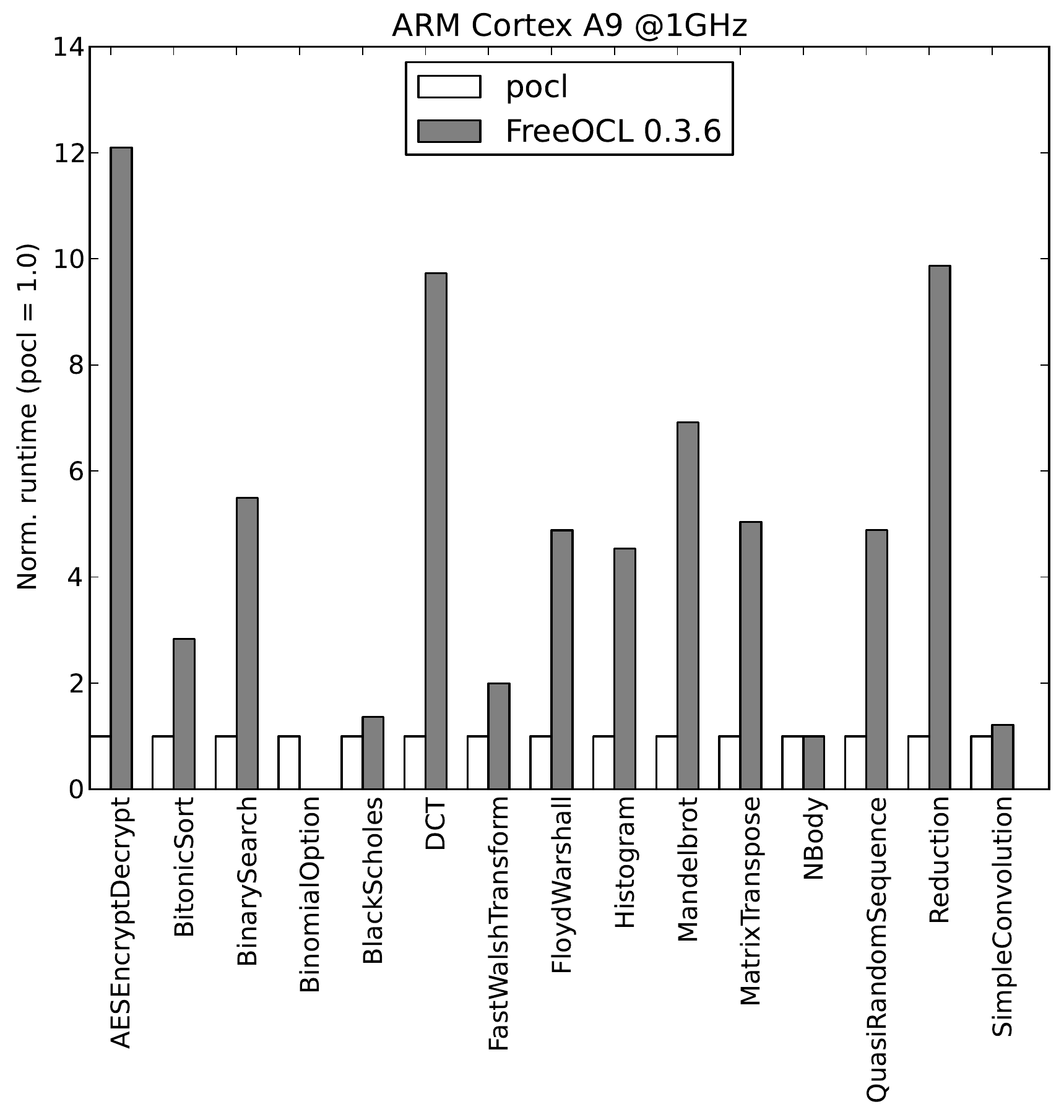}
\caption{Benchmark execution times (smaller is better) with ARM Cortex-A9, 
1GB RAM in Ubuntu Linux.}
\label{fig:resultsCortexA9}
\end{center}
\end{figure}

\subsection{STI Cell Broadband Engine / Power Processing Element}
 
The Cell Broadband Engine~\cite{SPUALT} is a heterogeneous multiprocessor 
consisting of a PowerPC and 8 Synergistic Processing Units (SPU's). pocl can utilize 
the PowerPC via the \emph{basic} and \emph{pthreads} drivers.
The \emph{spu} driver in pocl can execute programs on the SPU processors. However, a majority of the test cases
failed with compiler errors due to the immature state of the LLVM SPU backend. Also, as LLVM has removed the
SPU backend since the 3.2 release, the benchmarks were not run on the SPU parts of the Cell.
The PowerPC of the cell was benchmarked on a Sony Playstation 3, running the Debian \emph{sid} operating system.
The IBM OpenCL Development Kit v0.3~\cite{IBMOCL} was used as a benchmark reference on this platform.
The reference benchmarks were run using the 'CPU'-device in both the OpenCL implementations, i.e. the SPUs were 
not used. The comparative results varied significantly (see Fig.~\ref{fig:resultsPPC}) with pocl
performing the best in the vast majority of the benchmarks.

\begin{figure}
\begin{center}
\includegraphics[width=0.8\textwidth]{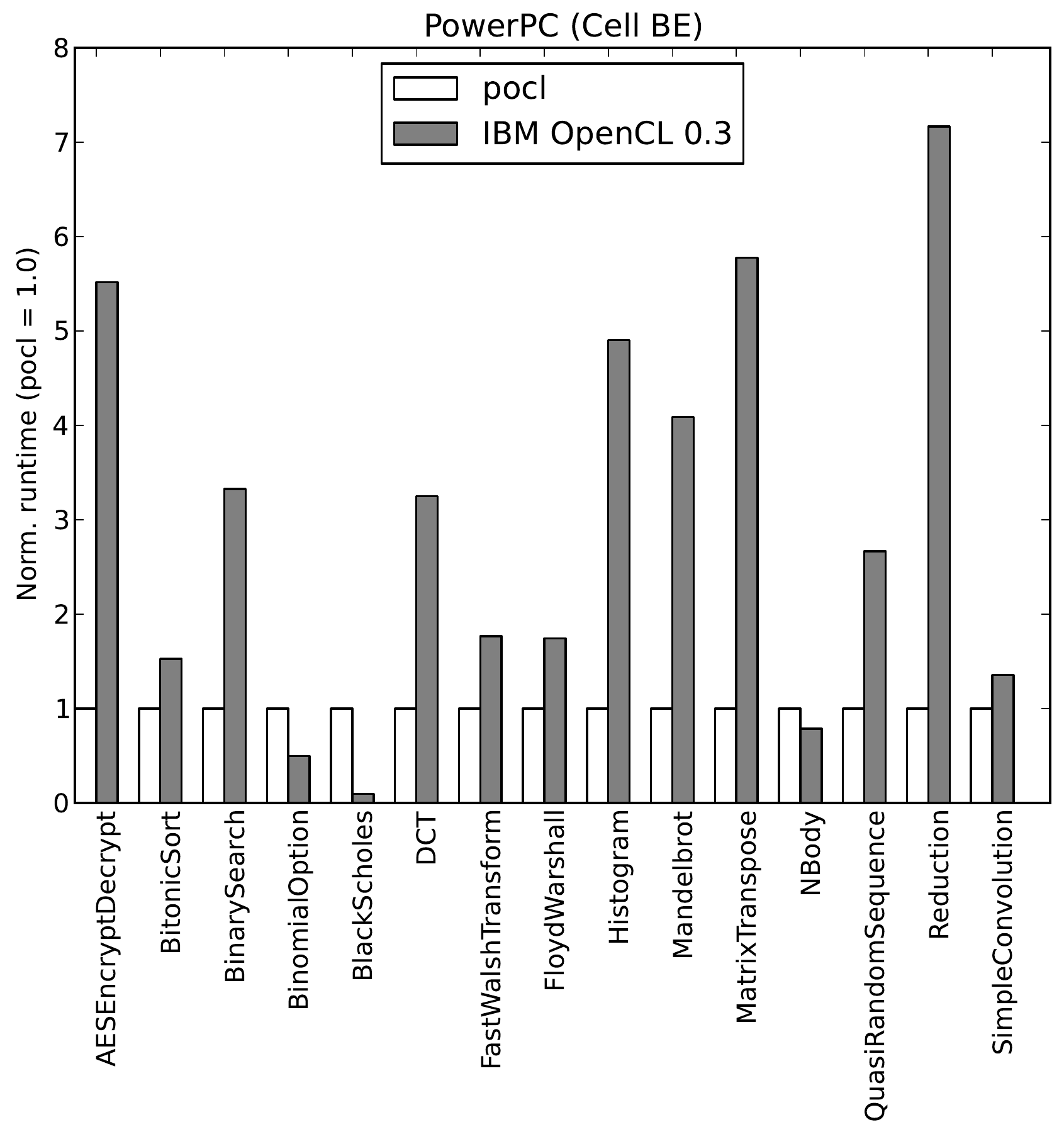}
\caption{Benchmark execution times (smaller is better) with STI CellBE @ 3.2GHz, 256MB RAM running Debian sid.
Both OpenCL implementations utilize the PowerPC processor only.}
\label{fig:resultsPPC}
\end{center}
\end{figure}

%
%
%

\subsection{Static multi-issue}

An important feature of the pocl kernel compiler is its separation of 
parallelism exposing transformations (parallel region formation) and the
actual parallelization of the known-parallel regions to the processor's
resources. 
The platforms in the previous benchmarks have exploited the 
parallelism available in dynamic multi-issue CPUs, their SIMD 
extensions and multiple cores or hardware threads.
Static multi-issue architectures are interesting especially in low 
power devices as they reduce 
the hardware logic needed to support parallel computation and they rely on 
the compiler to exploit the parallel function units in the 
machine~\cite{201766}. 
%
In order to test how well the proposed kernel
compilation techniques can exploit the parallelism in VLIW-style machines, we designed 
a \textit{Transport Triggered Architecture (TTA)} processor with multiple parallel function 
units. For this, the publicly available processor design toolset \textit{TCE (TTA-Based 
Co-design Environment)}~\cite{TCEFPGA-ALT,TCEURL} was used.

\textit{Transport Triggered Architecture (TTA)} is a VLIW
architecture with a programmer exposed interconnection network~\cite{Corporaal:97:TTA_BOOK}. 
It exposes the instruction level parallelism statically like the traditional VLIWs but 
adds more instruction scheduling freedom due to the transport programming model. 
For this benchmark we used a pocl device layer implementation that accesses the instruction 
set simulator engine of TCE for modeling a TTA-based accelerator device. The simulator engine
is instruction cycle count accurate, thus allows measuring the scalability of scheduling 
the multi-WI work-group functions statically to function units. The processing
resources in the designed TTA are listed in Table~\ref{table:ttaResources}.

\begin{table}[h]

\begin{center}

\begin{tabular}{|l|c|}
\hline
Resource & \# \\
\hline
Integer register files (1rd+1wr port, 32 regs each) & 4 \\
Boolean register files (1rd+1wr port, 16 regs each) & 5 \\
Integer ALUs & 4 \\
Float add+sub units & 4 \\
Float multiplier units & 4 \\
Load-store units (for global and local) & 9 \\
\hline
\end{tabular}

\caption{Computational resources in the TTA datapath used in the ILP benchmark.
}
\label{table:ttaResources}
\end{center}
\end{table}

The test application used here was the unmodified
DCT benchmark from the AMD SDK. 
This benchmark is a good example which benefits from the inner loop 
horizontal parallelization of the kernel compiler to improve the
exploitable parallelism (see Section~\ref{sec:horizontalLoopParallelization}). 
The kernel has two loops without barriers and have an iteration count 
given as the kernel argument. Thus, without the horizontal parallelization
transformation, the loops could not be unrolled and are executed for each
work-item in a sequence with very limited instruction-level parallelism. 
The kernel execution time without the horizontal
parallelization was 53.5~ms and 10.2~ms (scaled to 100~MHz) when the 
horizontal inner loop parallelization pass was used. Thus, the ILP increase when
exploiting the kernel parallelization pass was roughly five-fold.

\subsection{Performance of the Built-in Functions}

We evaluated the speed of certain mathematical functions implemented
using the Vecmathlib for various
vector sizes on an Intel Core i7 (with SSE4.2 vector instructions) and
on a PS3 (with Altivec vector instructions). We compared
scalarizing the
function calls and marshalling them to libm, which presumably provides
an optimized scalar calculation, to Vecmathlib's vectorized
implementation. Results are presented in tables
\ref{tab:vecmathlib-benchmark} and
\ref{tab:vecmathlib-benchmark-altivec}. The benchmarks used the
\texttt{-ffast-math} option, and each calculation was repeated
10,000,000 times in a loop to obtain more accurate measurements.

\begin{table}
  \begin{tabular}{|lrl|r|rrr|}\hline
  type   & vector & impl. & overhead &      exp &      sin &     sqrt \\
         &   size &       & [cycles] & [cycles] & [cycles] & [cycles] \\\hline
  float  &      1 & libm  &      2.6 &     38.3 &     47.5 &     13.2 \\
         &        & SSE2  &      2.3 &     12.9 &     12.4 &     12.1 \\
         &      4 & libm  &     27.5 &    433.6 &    474.3 &     92.0 \\
         &        & SSE2  &      9.3 &     52.4 &     41.6 &     21.8 \\
  double &      1 & libm  &      2.3 &     23.9 &     33.2 &     17.5 \\
         &        & SSE2  &      2.3 &     23.1 &     21.6 &     18.2 \\
         &      2 & libm  &      6.2 &     92.2 &    127.1 &     51.8 \\
         &        & SSE2  &      4.6 &     52.5 &     41.8 &     21.9 \\\hline
  \end{tabular}
  \caption{Performance benchmark results, showing execution time in
    cycles (lower is better). This compares a naive, scalarizing
    implementation via libm to Vecmathlib for exp, sin, and sqrt. The
    column ``overhead'' shows the approximate overhead of the
    benchmarking harness. Note that scalarization by itself is
    expensive since it requires vector shuffle operations (see
    overhead column). Note also that, in many cases, Vecmathlib is
    more efficient than calling libm even in the scalar case. On this
    system, exp and sin are implemented via a generic algorithm,
    whereas sqrt is implemented via a machine instruction.}
  \label{tab:vecmathlib-benchmark}
\end{table}

\begin{table}
  \begin{tabular}{|lrl|r|rrr|}\hline
  type   & vector & impl.   & overhead &      exp &      sin &     sqrt \\
         &   size &         & [cycles] & [cycles] & [cycles] & [cycles] \\\hline
  float  &      1 & libm    &      0.3 &     30.1 &     18.4 &      6.9 \\
         &      4 & libm    &      1.4 &    485.6 &    302.6 &    114.1 \\
         &        & Altivec &      1.3 &     13.4 &     33.9 &      8.7 \\\hline
  \end{tabular}
  \caption{Performance benchmark results, showing execution time in
    cycles (lower is better). The qualitative results are similar to
    Table~\ref{tab:vecmathlib-benchmark}. On this system, exp and sin
    are implemented via a generic algorithm, whereas sqrt profits from
    a special machine instruction.}
  \label{tab:vecmathlib-benchmark-altivec}
\end{table}

It is clearly evident that Vecmathlib's implementation is in all cases
at least as efficient as libm, even in the scalar case. For vector
types, Vecmathlib is always significantly more efficient, since
scalarizing (disassembling and later re-assembling) a vector is an
expensive operation in itself. In particular for single precision,
Vecmathlib is significantly faster (for exp and sin) than libm; this
is presumably because libm's implementation uses the Intel
\texttt{fexp} and \texttt{fsin} machine instructions which always uses
double precision, whereas Vecmathlib evaluates these functions only
for single precision. For the scalar sqrt function, there is almost no
speed difference, because both libm and Vecmathlib employ the SSE2
\texttt{sqrtss} instruction.

\section{Related work}
\label{sec:RelatedWork}


There has been previous work related the kernel compiler transformations. For example, 
\textit{Whole Function Vectorization (WFV)}~\cite{WFV,ImprovingOpenCLPerf} 
 is a set of vectorization techniques tailored 
for efficient vectorization of SPMD descriptions  
such as the OpenCL work-group functions with multiple work-items.
Similar approach is used in the \textit{Implicit Vectorization Module} of
Intel's OpenCL SDK~\cite{IntelOpenCLVec}. These solutions rely on 
a certain type of parallel computation 
resources during the kernel compilation such as vector instruction set extensions and
perform the vectorization by expanding the scalar operations to their vector
counterparts, whenever possible.
This style of ``monolithic approaches'' are platform specific with limited support for 
performance portability. That is, when adding support to new devices with different
parallel hardware, larger part of the kernel compiler has to be updated. The approach 
taken by pocl is to split the work-group vectorization to a generic step that 
identifies the parallel regions, converts the regions to data parallel loops, and 
retains the parallelism information to the later stages of compilation. The 
later stages are the same as in a standard vectorizing compilation. Therefore,
when porting pocl to a new platform that supports LLVM, minimal effort is needed 
to get a working and efficient implementation.

Extracting parallelism from sequential programs, especially 
from loops that are not (known to be) parallel is a challenge that has 
received extensive attention in the past decades. As an example of one of the more 
recent works, the techniques proposed by \textit{Nicolau et al.}
enhance the thread level parallelism of loops with inter-iteration dependencies by intelligent 
placement of \textit{post and wait} synchronization primitives~\cite{Nicolau1,Nicolau2}. 
In the case of the compilation flow presented in this article, the key problem is not 
the extraction of parallelism from a serial program because the input kernels 
are parallel by default and explicitly synchronized by barriers. The 
complexity is in finding at compile time the parallel regions of multiple instances 
of the work-item descriptions,
and this parallelism is typically mapped to finer grain resources such as
vector lanes or function units. However, the work of \textit{Nicolau et al.} could be 
used to enlarge the found parallel regions to make their execution more efficient 
in machines that execute work-items from a single work-group in separate threads or 
cores.

The earliest mentions we found of the idea of generating ``work-group functions'' 
that execute multiple work-items in parallel to improve the performance of 
SPMD optimized programs on non-SPMD hardware has been published previously in the 
context of CUDA~\cite{MCUDA}. The same idea is referred to as ``work-item
coalescing'' for OpenCL in~\cite{OpenCLFW}. The MCUDA ideas can be applied directly to
OpenCL kernels as the concept of SPMD descriptions with barrier synchronization is 
identical in both languages. These previous works 
present the key idea of identifying regions that are parallel and executing them for 
all work-items. 
The previous works are implemented as source-to-source transformations which
is great for portability, but lacks the performance portability benefits.
The feature that impacts the \textit{performance portability} aspect 
the most is the transfer of parallelism information. In case of source to source 
approaches, the information is lost as the parallel regions are converted to
serial loops, ending up with the usual alias analysis complexity present in,
e.g. C loops. The transferring of parallel loop data using the LLVM IR metadata 
enables pocl to maintain the information of the data parallel regions in
work-group and benefit the later optimization stages.
We also perform additional 
parallelism-improving optimizations such as inner-loop parallelization of kernels 
without barriers by selectively converting them to kernels with implicit barriers,
effectively parallelizing the outer loop (work-item loop). Finally, a major 
drawback of the source-based approaches is the language-dependence. With the 
introduction of the SPIR standard~\cite{SPIR12}, it is now possible to define 
OpenCL kernels using multiple alternative languages. Because SPIR uses LLVM IR, 
the proposed kernel parallelization techniques apply to kernels loaded from 
SPIR binaries as well.

There are also previous attempts to provide portable 
OpenCL implementations.
One of the well known ones is Clover~\cite{Cloverhome}, which is an
OpenCL implementation providing GPU
computation support using open source drivers. 
%
%
Clover implements the work-group
barriers using light weight threads (or ``fibers``). 
A similar fiber-based approach is  \textit{Twin Peaks}~\cite{TwinPeaks,SetjmpLongjmpBarrier}, 
which proposes using optimized setjmp/longjmp functions for implementation.
The drawback with the fiber approach is that the light weight threads do not allow 
implicit static parallelization of multi-WI work-groups~\cite{CloverBarriers}.
Therefore, the performance portability and ''scaling`` is limited with
these solutions.
After all, 
the main source for parallelism in OpenCL kernels is the ability to 
execute operations from multiple work-items in any order, also statically using
fine grained parallel resources such as SIMD or VLIW instructions. 
This cannot be achieved when threads with independent control are 
spawned for work-items. There are also overheads in the fiber approach
due to the context switches itself, but it is clear that the 
capability to horizontally parallelize work-groups has the main performance 
benefit in the proposed work.

FreeOCL~\cite{FreeOCL} is an open source implementation of the OpenCL 1.2. The 
target of FreeOCL is stated as \textit{``It aims to provide a debugging tool and a reliable 
platform which can run everywhere.''} FreeOCL relies on an external C++ compiler
to provide a platform portable implementation, but again does not provide a kernel
compiler with static parallelization of work-items to improve \textit{performance} 
portability. Like several other implementations, it relies on the fiber approach for 
implementing multi-WI work-group execution.

The proposed approach attempts to improve the performance portability over a wide range of 
platforms; the pocl kernel compilation does not rely on any specific parallel computation 
resources (unlike WFV, which relies on vectorization). This is
apparent in its separation of the compiler analysis that expose the parallel 
regions between work-group barriers from the generic parallelization
passes (such as vectorization or VLIW scheduling).
This style of modularized kernel compilation improves the performance 
portability of the OpenCL implementation thanks to the freedom to map the parallel operations
in the best way possible to the resources of the device at hand. 

The proposed solution uses static program
analysis to avoid using threads with independent control flow
for executing multiple work-item kernels with barriers, which allows improved performance 
portability compared to fiber-based approaches like Clover and Twin Peaks.  
For improving the platform portability aspect, the proposed solution uses only C language 
for the the host API implementation (instead of C++ as used in Clover) 
in order to allow porting the code to a wider range of embedded platforms 
without extensive compiler or runtime support.

\section{Conclusions and Future Work}
\label{sec:Conclusions}

In this article, we described a modular performance portable OpenCL kernel compiler 
and a portable OpenCL implementation called \textit{pocl}. 
The modular kernel compiler provides an efficient 
basis for kernel compilation on various devices with parallel resources of different 
granularity. The kernel compiler is constructed to separate the analysis that expose 
the parallelism from multi-WI work-groups and to more standard optimizations that 
perform the actual static parallelization of the parallel regions to different 
styles of fine-grained parallel computation hardware, such as SIMD, VLIW, or 
superscalar architectures. The data parallelism information of multiple work-item
work-group functions is transferred using LLVM IR metadata for later compilation 
phases.
The experiments on different processor architectures showed that 
pocl can be used to port OpenCL applications efficiently and it can exploit 
various kinds of parallelism available in the underlying hardware.
The pocl framework can also be used as an experimentation platform for the popular 
OpenCL programming standard, and it provides an OpenCL implementation framework 
for engineers designing new parallel computing devices.

The pocl kernel compiler itself is fully functional and usually very efficient. 
It was shown that most of the benchmarked applications were faster or close to
as fast as the best proprietary OpenCL implementation for the platform at hand.

At its current state, most of the performance improvements to the kernel 
compiler of pocl will be language generic in nature. They can be implemented to 
the LLVM infrastructure and as a result benefit also non-OpenCL programs. 

For example, we plan to add selective scalarization of vector code inside loops. 
That is, in case the loop vectorization cannot be applied for some reason, the 
original vector code added by the programmer should be left
intact to still allow exploiting some SIMD instructions. Same applies to the 
aggressive inlining of built-ins and other functions. The current way of 
inlining everything to the kernel function can be counter-productive 
due to the larger instruction cache footprint in case it does not improve the 
vectorization or other form of static parallelization of the work-items. 
This will be more the case in the future as larger and larger 
kernels are implemented using OpenCL. We plan to more intelligently choose 
when to inline and when not on work-item loop basis. A method similar to the 
one presented in~\cite{Cammarota:2013:DIV:2450247.2450262} could be used. 
All of the worst-performing cases presented in Section~\ref{sec:Evaluation} would 
benefit from these. 

Another 
bottleneck we identified is the limited support for if-conversion~\cite{p177-allen} in 
the current LLVM version. The inability to predicate some otherwise statically 
parallelizable work-item loops is one of the biggest slowdowns in the worst 
performing benchmark cases. Related to this, there are 
several OpenCL-specific optimizations we plan to experiment with. For 
example, improving the parallelization of kernels with diverging 
branches (parts executed only by a subset of the work-items) is one 
of the low-hanging fruits. There is some previous work available that 
is targeted towards enhanced load-balancing which could be adapted 
to improving the fine-grained parallelization on machines with
limited support for predication as well~\cite{Kejariwal}.

%

%

\begin{acknowledgements}
  The work has been financially supported by the Academy of Finland (funding 
  decision 253087), Finnish Funding Agency for Technology and Innovation
  (project "Parallel Acceleration", funding decision 40115/13), 
  by NSF awards 0905046, 0941653, and 1212401, as well as an NSERC grant to 
  E. Schnetter. The authors would also like to thank the constructive
  comments and references pointed out by the reviewers.
\end{acknowledgements}

\bibliographystyle{spmpsci} 
\bibliography{Bibliography}

\end{document}